\newtheorem {lemma}{Lemma}
\newtheorem {theorem} {Theorem}
\newtheorem {corollary}{Corollary}
\begin{document}

\title{Link residual closeness of graphs with fixed parameters}

\author{Leyou Xu\footnote{E-mail: leyouxu@m.scnu.edu.cn}, Chengli Li\footnote{E-mail: lichengli@m.scnu.edu.cn}, Bo Zhou\footnote{
E-mail: zhoubo@scnu.edu.cn}\\
School of  Mathematical Sciences, South China Normal University,\\
Guangzhou 510631, P.R. China}

\date{}
\maketitle

\begin{abstract}
Link residual closeness is a newly proposed measure for network vulnerability. In this
model, vertices are perfectly reliable and the links fail independently of each other.
It measures  the vulnerability even when the removal of links does not disconnect the graph.
In this paper, we characterize those graphs that maximize the link residual closeness over the connected graphs with fixed order and one parameters such as connectivity, edge connectivity, bipartiteness, independence number, matching number, chromatic number,  number of vertices and number of cut edges.\\ \\
{\bf Mathematics Subject Classifications:} 68M15, 68R10, 05C12,  05C35\\ \\
{\bf Keywords and Phrases:} network vulnerability, link residual closeness, graph parameters
\end{abstract}

\section{Introduction}

The vulnerability of a network is  the measurement of the global strength of its underlying graph where the vertices represent the processing elements of the system and the edges (or links) connect pair of vertices
that mutually interact exchanging information \cite{HK,Ja}.
In understanding  of computer networks, how to protect a
network from vulnerability or to improve network robustness remains an overarching concern.

It is highly desirable to identify a class of easily
computed measures that characterize network vulnerability.
There are lots of different measures for network vulnerability such as  connectivity, toughness, scattering number, binding number, and their link counterparts, see, e.g. \cite{Fr2,Chv,Ju,Woo}.
These measures may be used if network failure (by the removal of vertices or links) means that the underlying graph has become disconnected or trivial.  To measure the vulnerability
even when the removal of vertices/links does not disconnect the graph,
Dangalchev \cite{Dan} proposed a new type vulnerability measure that is called
residual closeness (including vertex and link versions).
It was argued in \cite{Dan} that vertex (link, respectively) residual closeness is the most appropriate approach for modeling the robustness of network topologies in the face of possible vertex (link, respectively) destruction.
The vertex version has received a lot of attention, see, e.g. \cite{AO1,AO2,AO3,Dan2,Dan3,OA1} for the computational aspects and \cite{CZ,WZ,ZLG} for the extremal properties.
However, the link version received less attention.
Berberler and Yi\v{g}it established formulae for the link residual closeness of  path-type graphs such as regular caterpillars in \cite{BY}, wheel type graphs in \cite{YB2}, and
composite  graphs such as graph unions and graph joins in \cite{YB}.  Some preliminary extremal properties of the link residual closeness were given in \cite{ZLG}. For example, the trees with minimum and maximum link residual closeness were determined there.

Let $G$ be a graph with vertex set $V(G)$ and edge $E(G)$.
For a graph $G$ with $u,v\in V(G)$, the distance between $u$ and $v$ in $G$, denoted by $d_G(u,v)$, is the length of a shortest path connecting them in $G$, and let $d_G(u,v)=\infty$ if there is no path connecting $u$ and $v$ in $G$. In particular, $d_G(u,u)=0$ for any $u\in V(G)$.
For a vertex $u$ of a graph $G$, the closeness of $u$ in $G$ is defined as
$C_G(u)=\sum_{v\in V(G)\setminus\{u\}}2^{-d_G(u,v)}$.
The   closeness of a graph $G$ is defined as
\[
C(G)=\sum_{u\in V(G)}C_G(u)=\sum_{u\in V(G)}\sum_{v\in V(G)\setminus\{u\}}2^{-d_G(u,v)}.
\]
The link residual closeness of a nonempty graph $G$ is defined as \cite{Dan}
\[
R^L(G)=\min\{C(G-e): e\in E(G)\}.
\]
If $G$ is empty, then we set $R^L(G)=0$. It is evident that $R^L(G)=0$ if and only if $|E(G)|=0,1$.
For completeness, we mention that vertex residual closeness of a nontrivial graph $G$ is defined as \cite{Dan}
\[
R(G)=\min\{C(G-v): v\in V(G)\}.
\]

To have a fuller understanding of the relationship between the link residual closeness and the structural properties of the graphs, we consider the extremal problems to maximize the link residual closeness in some families of connected graphs.  In this paper, we identify those graphs that maximize the link residual closeness in the families of connected graphs of fixed order and one of the parameters such as connectivity, edge connectivity, bipartiteness, independence number, matching number, chromatic number, number of cut vertices and number of cut edges.

\section{Preliminaries}

For a vertex $u$ of a graph $G$, the neighborhood of $u$ in $G$ is the set  $N_G(u)=\{v: uv\in E(G)\}$ and the degree of $u$ in $G$ is $|N_G(u)|$, denoted by $\delta_G(u)$.

For a nonempty proper subset $V_1$ of vertices of a graph $G$,  $G-V_1$ denotes the subgraph of $G$ obtained by deleting all vertices in $V_1$ (and the incident edges) from $G$, and in particular, if $V_1=\{u\}$, then we write $G-u$ for $G-\{u\}$.
For a subset $E_1$ of edges of a graph $G$,   $G-E_1$ denotes the subgraph obtained from $G$ by deleting all edges in $E_1$, and in particular, if $E_1=\{e\}$, then we write $G-e$ for $G-\{e\}$. The complement $\overline{G}$ of a graph $G$ is the graph with vertex set $V(G)$ so that two vertices are adjacent in $\overline{G}$ if and only they are not adjacent in $G$.
 For a graph $G$ with $E_2\subseteq E(\overline{G})$,  $G+E_2$ denotes the graph obtained from $G$ by adding all edges in $E_2$, and we write $G+e$ for $G+\{e\}$ when $E_2=\{e\}$.

For a nonempty  subset $V_1$ of vertices of a graph $G$, $G[V_1]$ denotes the subgraph of $G$ induced by $V_1$. 
For vertex disjoint graphs $G_1$ and $G_2$, the union of $G_1$ and $G_2$, denoted by $G_1\cup G_2$, is the graph with vertex set $V(G_1) \cup V(G_2)$ and  edge set $E(G_1)\cup E(G_2)$. The join of $G_1$ and $G_2$, denoted by $G_1\vee G_2$, is the graph $(G_1\cup G_2)+\{v_1v_2: v_1\in V(G_1), v_2\in V(G_2)\}$, i.e.,
$G_1\vee G_2= \overline{\overline{G_1} \cup \overline{G_2}}$.

Let $K_{n_1,\dots,n_k}$ be the complete $k$-partite graph with coloring class sizes (or partite sizes) $n_1, \dots, n_k$. Let $S_n=K_{1, n-1}$. A complete $k$-partite graph $K_{n_1,\dots,n_k}$ is said to be balanced if $|n_i-n_j|\le 1$ for any $i,j$ with  $1\le i<j\le k$. Let $K_n$ be the $n$-vertex complete graph. Let $P_n$ be the $n$-vertex path.

%


\begin{lemma}  \label{AddE}
Let $G$ be a nonempty graph  in which two vertices $u$ and $v$ are not adjacent.   Then $R^L(G) <  R^L(G+uv)$.
\end{lemma}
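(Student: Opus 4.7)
The plan is to exploit an elementary monotonicity property of the closeness functional under edge addition, then apply it once per edge of $H = G + uv$. The central auxiliary claim I would first establish is: for any graph $G'$ and any two non-adjacent vertices $x, y$ of $G'$, one has $C(G' + xy) > C(G')$. This is immediate from distance monotonicity: every $ab$-walk in $G'$ remains a walk in $G' + xy$, so $d_{G' + xy}(a,b) \leq d_{G'}(a,b)$ for every pair of distinct vertices $a, b$, hence $2^{-d_{G'+xy}(a,b)} \geq 2^{-d_{G'}(a,b)}$ with the convention $2^{-\infty} = 0$; and for the pair $(x, y)$ the inequality is strict, since $d_{G'}(x,y) \geq 2$ (or $\infty$) while $d_{G'+xy}(x,y) = 1$. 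Summing gives $C(G' + xy) > C(G')$.

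With this lemma, I would show that every value appearing in the finite minimum $R^L(H) = \min\{C(H - e) : e \in E(H)\}$ strictly exceeds $R^L(G)$, splitting into two cases according to the type of edge removed. If $e \in E(G)$, then $H - e = (G - e) + uv$; since $uv \notin E(G)$ forces $uv \notin E(G - e)$, the auxiliary claim yields $C(H - e) > C(G - e) \geq R^L(G)$. If instead $e = uv$, then $H - uv = G$; choosing any edge $e_0 \in E(G)$ (which exists because $G$ is nonempty) and applying the auxiliary claim to $G - e_0$ with $e_0$ reinserted gives $C(G) > C(G - e_0) \geq R^L(G)$. Taking a finite minimum of values each strictly greater than $R^L(G)$ leaves a value still strictly greater than $R^L(G)$, so $R^L(H) > R^L(G)$.

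I do not foresee any real obstacle; the argument is purely structural and rests entirely on distance monotonicity under edge addition. The only items requiring care are the correct handling of possibly disconnected intermediate graphs (where $C$ remains well-defined via $2^{-\infty} = 0$) and the separate treatment of the case when the removed edge of $H$ is $uv$ itself, which is precisely why the nonemptiness of $G$ is invoked.
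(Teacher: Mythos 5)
Your proposal is correct and follows essentially the same route as the paper: both split on whether the removed edge of $H=G+uv$ is $uv$ itself or an edge of $G$, and both reduce each case to the strict monotonicity of $C$ under edge addition (which the paper uses implicitly and you prove explicitly via distance monotonicity). The only cosmetic difference is that you bound $C(H-e)$ for every edge $e$ rather than only for the minimizing one, which changes nothing of substance.
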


\begin{proof} Let $H=G+uv$.
Assume that $R(H)=C(H-e)$ with $e\in E(H)$.
If $e=uv$, then $R^L(H)=C(H-e)=C(G)>C(G-f)\ge R^L(G)$ for any $f\in E(G)$.
If $e\ne uv$, then $R^L(H)=C(H-e)>C(H-e-uv)=C(H-uv-e)=C(G-e)\ge R^L(G)$.
\end{proof}

If $G=K_{n_0} \vee (K_{n_1}\cup\dots\cup K_{n_t})$, then we call $K_{n_i}$ the $i$-th inner copy of $G$, where $i=1,\dots, t$.

\begin{lemma}\label{kst}
For integers $t\ge 2$, $n_0\ge 1$ and $1\le n_1\le \dots\le n_t$,
let $G=K_{n_0} \vee (K_{n_1}\cup\dots\cup K_{n_t})$ and $n=\sum_{i=0}^{t}n_i$.
If $n_0=1$, then
\[
R^L(G)=\begin{cases}
\frac{1}{4}\sum_{i=2}^{t} n_i^2+\frac{1}{4}n^2-\frac{1}{2}n &\mbox{ if }n_1=1,\\
\frac{1}{4}\sum_{i=1}^t n_i^2+\frac{1}{4}n^2-\frac{1}{4}n+\frac{1}{4}n_1-\frac{1}{2} &\mbox{ if }n_1\ge 2.
\end{cases}
\]
Otherwise,
\[
R^L(G)=\frac{1}{4}\sum_{i=1}^{t} n_i^2+\frac{1}{4}n^2+\frac{n_0-1}{2}n-\frac{1}{4}n_0^2-\frac{1}{2}.
\]
\end{lemma}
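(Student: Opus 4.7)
The plan is to compute $C(G-e)$ for a representative of each class of edge, pick the edge minimising $C(G-e)$, and then substitute a closed form for $C(G)$ into the result. The key structural observation is that $G$ has diameter at most $2$: vertices of $K_{n_0}$ dominate the graph, vertices of the same inner $K_{n_i}$ are adjacent, and two vertices in distinct inner copies are at distance $2$ through any hub vertex. Counting unordered pairs by distance gives
\[
C(G) = |E(G)| + \tfrac{1}{2}\sum_{1\le i<j\le t} n_i n_j,
\]
and inserting $|E(G)| = \binom{n_0}{2} + n_0(n - n_0) + \sum_{i=1}^t \binom{n_i}{2}$ together with $2\sum_{i<j} n_i n_j = (n - n_0)^2 - \sum_{i=1}^t n_i^2$ simplifies this (routinely) to
\[
C(G) = \tfrac{1}{4}\sum_{i=1}^t n_i^2 + \tfrac{1}{4}n^2 + \tfrac{n_0-1}{2}\,n - \tfrac{1}{4}n_0^2.
\]

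Next I would run a case analysis over the three kinds of edges of $G$: edges inside $K_{n_0}$, edges inside some inner $K_{n_i}$, and join edges $e = uv$ with $u \in K_{n_0}$ and $v \in K_{n_i}$. For the first two kinds, the endpoints of $e$ remain at distance $2$ in $G - e$ (through another hub vertex, another vertex of the same clique, or any vertex of the opposite side of the join), no other pair moves, and so $C(G-e) = C(G) - \tfrac{1}{2}$. Join edges split into three subcases. If $n_0 \ge 2$, the redundancy in $K_{n_0}$ again forces only the pair $\{u,v\}$ to change distance, giving $C(G-e) = C(G) - \tfrac{1}{2}$. If $n_0 = 1$ but $n_i \ge 2$, then $d_{G-e}(u,v) = 2$ still, but for every $w$ in a different inner copy the pair $\{v,w\}$ has its distance jump from $2$ to $3$ (the only route is $v \to v' \to u \to w$), producing $C(G-e) = C(G) - \tfrac{1}{2} - \tfrac{n-1-n_i}{4}$. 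If $n_0 = n_i = 1$, then $v$ becomes isolated and $C(G-e) = C(G) - \tfrac{n}{2}$.

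Finally I would compare the decreases. When $n_0 \ge 2$ every edge yields $C(G) - \tfrac{1}{2}$, hence $R^L(G) = C(G) - \tfrac{1}{2}$. When $n_0 = 1$ and $n_1 = 1$, the pendant join-edge decrease $\tfrac{n}{2}$ is the largest (it trivially beats $\tfrac{1}{2}$ and, for $n_i \ge 2$, also beats $\tfrac{1}{2} + \tfrac{n-1-n_i}{4}$), so $R^L(G) = C(G) - \tfrac{n}{2}$. When $n_0 = 1$ and $n_1 \ge 2$, the join-edge decrease $\tfrac{1}{2} + \tfrac{n-1-n_i}{4}$ is maximised by choosing $i = 1$, and the hypothesis $t \ge 2$ ensures $n - 1 - n_1 \ge n_2 > 0$, so this strictly exceeds the $\tfrac{1}{2}$ coming from inner edges, giving $R^L(G) = C(G) - \tfrac{1}{2} - \tfrac{n-1-n_1}{4}$. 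Substituting the closed form of $C(G)$ into each of these three values yields exactly the three expressions in the lemma. The main obstacle is the $n_0 = 1$ bookkeeping: deleting a hub-to-inner edge there has cascading effects that push some distances up to $3$ or $\infty$, and those corrections are precisely what produce the non-trivial terms $\tfrac{n - 1 - n_1}{4}$ and $\tfrac{n}{2}$; once they are tracked carefully the remaining simplification is purely algebraic.
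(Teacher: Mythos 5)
Your proposal is correct and follows essentially the same route as the paper: compute the closed form of $C(G)$ by counting pairs at distance $1$ and $2$, classify the edges (hub--hub, inner, join), track the exact decrease in $C$ caused by deleting each type — including the distance-$2\to3$ jumps and the isolation effect when $n_0=1$ — and take the minimum, which yields the three stated expressions. The case analysis, the per-edge decreases ($\tfrac12$, $\tfrac12+\tfrac{n-1-n_\ell}{4}$, $\tfrac n2$), and the comparisons all match the paper's argument.
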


\begin{proof}
It can be easily seen that
\begin{align*}
C(G)&=n_0\sum_{i=1}^{t}n_i+\sum_{i=0}^{t}{n_i\choose 2}+\frac{1}{2}\sum_{1\le i<j\le t} n_in_j\\
&=n_0(n-n_0)+\frac{1}{2}(n_0^2-n_0)\\
&\quad +\frac{1}{4}\left(\sum_{i=1}^t n_i^2+(n-n_0)^2-2(n-n_0)\right)\\
&=\frac{1}{4}\sum_{i=1}^t n_i^2+\frac{1}{4}n^2+\frac{n_0-1}{2}n-\frac{1}{4}n_0^2.
\end{align*}	
	
\noindent
{\bf Case 1.} $n_0=1$.

Denote by $u$  the unique vertex with degree $n-1$ in  $G$.
Let $v\in V(G)\setminus\{u\}$.

If the degree of $v$ is one, then
\begin{align*}
C(G-uv) &=C(K_1\vee (K_{n_2}\cup\dots\cup K_{n_t}))\\
&=C(G)-1-\frac{1}{2}\sum_{i=2}^tn_i,
\end{align*}
which is minimum for if $v$ is in the $1$-st inner copy of $G$ because $n_1\le \dots\le n_t$.

If $v$ is of degree at least two, then $v$ is in the $\ell$-th inner copy of $G$ such that $n_{\ell}\ge 2$ for some $\ell=1,\dots,k$,
$d_{G-uv}(u,v)=2$, $d_{G-uv}(v,z)=3$ if $z$ is any vertex in the $i$-th inner copy with $i=1,\dots, t$ and  $i\ne \ell$, and
as we pass from $G$ to $G-uv$, the distance between any other vertex pair  remains unchanged, so
\begin{align*}
C(G-uv) & =C(G)-1+\frac{1}{2}-\frac{1}{2}\sum_{i=1,i\ne \ell}^{t}n_i+\frac{1}{4}\sum_{i=1,i\ne \ell}^{t}n_i\\
&=C(G)-\frac{1}{2}-\frac{1}{4}\sum_{i=1,i\ne \ell}^{t}n_i,
\end{align*}
which is minimum for $\ell=1,\dots, t$ if $\ell=1$ because $n_1\le \dots\le n_t$.

Now, let $wz$ be any edge of $G$ with $w,z\ne u$.  Then both the degrees of $w$ and $z$ are at least two.  Evidently, $d_G(w,z)=1$, $d_{G-wz}(w,z)=2$, and
as we pass from $G$ to $G-wz$, the distance between any other vertex pair  remains unchanged, so
\[
C(G-wz)=C(G)-1+\frac{1}{2}>C(G-uv)
\]
whether the degree of $v$ is one or at least two.

If $n_1=1$, then $C(G)-1-\frac{1}{2}\sum_{i=2}^tn_i<C(G)-\frac{1}{2}-\frac{1}{4}\sum_{i=1,i\ne \ell }^{t}n_i$ for any $\ell=1,\dots, t$,
so $R^L(G)=C(G-uv)$ with $v$ in the $1$-st inner copy of $G$. That is,
\begin{align*}
R^L(G)&=C(G)-1-\frac{1}{2}\sum_{i=2}^tn_i\\
&=\frac{1}{4}\sum_{i=2}^{t} n_i^2+\frac{1}{4}n^2-\frac{1}{2}n.
\end{align*}

If $n_1\ge 2$, then
\begin{align*}
R^L(G)&=C(G)-\frac{1}{2}-\frac{1}{4}\sum_{i=2}^tn_i\\
&=\frac{1}{4}\sum_{i=1}^t n_i^2+\frac{1}{4}n^2-\frac{3}{4}-\frac{1}{4}(n-1-n_1)\\
&=\frac{1}{4}\sum_{i=1}^t n_i^2+\frac{1}{4}n^2-\frac{1}{4}n+\frac{1}{4}n_1-\frac{1}{2}.
\end{align*}

\noindent
{\bf Case 2.} $n_0\ge 2$.

For any $wz\in E(G)$, as we pass from $G$ to $G-wz$, the distance between $w$ and $z$ is changed from $1$ to $2$, and the distance between any other vertex pair  remains unchanged, so $C(G-wz)=C(G)-\frac{1}{2}$.
It follows that
\[
R^L(G)= C(G)-\frac{1}{2}=\frac{1}{4}\sum_{i=1}^t n_i^2+\frac{1}{4}n^2+\frac{n_0-1}{2}n-\frac{1}{4}n_0^2-\frac{1}{2},
\]
as desired.
\end{proof}

A path $u_0\dots u_{\ell}$ in a graph $G$ is a pendant path of length $\ell$ of $G$ at $u_0$ if   $\delta_G(u_0)\ge 3$, $\delta_G(u_{\ell})=1$, and if $\ell\ge 2$, then $\delta_G(u_i)=2$ for each $i=1,\dots, \ell-1$.
Particularly, if $\ell=1$, then $u_0u_1$ is called  a pendant edge of $G$ at $u_0$.

\begin{lemma}\label{pendant}
Let $G$ be a graph with a pendant path $v_0v_1\dots v_{\ell}$ at $v_0$, where $\ell\ge 2$.
Then $C(G-v_iv_{i+1})>C(G-v_{i-1}v_i)$ for $i=1,\dots,\ell-1$.
\end{lemma}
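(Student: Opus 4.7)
The plan is to exploit the fact that, for each $i\in\{1,\dots,\ell-1\}$, both $v_{i-1}v_i$ and $v_iv_{i+1}$ are bridges of $G$ that cut off a tail of the pendant path. First I would set $W := V(G)\setminus\{v_1,\dots,v_\ell\}$ and observe that $G - v_{i-1}v_i$ decomposes into a main component on $W\cup\{v_1,\dots,v_{i-1}\}$ together with a tail isomorphic to $P_{\ell-i+1}$, while $G - v_iv_{i+1}$ decomposes into a main component containing the additional vertex $v_i$ together with a shorter tail $P_{\ell-i}$. Since inter-component distances contribute $0$, and since no shortest path inside either main component can have used the deleted pendant-path edge (the tail is a dead end), all relevant distances in the main components coincide with those in $G$. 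Writing $c(m)$ for the closeness of $P_m$, this reduces the problem to comparing the gain from attaching $v_i$ to the main piece against the loss from shortening the tail by one vertex.

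My next step would be to evaluate the two pieces. A routine summation gives $c(m+1)-c(m)=2-2^{1-m}$, so the tail loss equals $2-2^{i-\ell+1}$. For the main-component gain I would compute $2\sum_{u}2^{-d_G(u,v_i)}$ over $u$ in the main piece of $G-v_{i-1}v_i$, splitting the sum into the stub $\{v_1,\dots,v_{i-1}\}$ (whose contribution is a geometric sum $2(1-2^{1-i})$) and the set $W$. The key structural input here is that $v_0$ is a cut vertex of $G$ separating $W\setminus\{v_0\}$ from the interior of the pendant path, so every shortest $u$--$v_i$ path with $u\in W$ must pass through $v_0$, giving $d_G(u,v_i)=d_G(u,v_0)+i$. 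Combining the two sides yields
\[
C(G-v_iv_{i+1}) - C(G-v_{i-1}v_i) = 2^{1-i}(\beta-2) + 2^{i-\ell+1},
\]
where $\beta := \sum_{u\in W} 2^{-d_G(u,v_0)}$.

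The main obstacle I expect is to secure the bound $\beta\ge 2$, since otherwise the first term could be negative and might swamp the tiny positive term $2^{i-\ell+1}$. This is exactly where the defining hypothesis $\delta_G(v_0)\ge 3$ of a pendant path enters the argument: since $v_1,\dots,v_{\ell-1}$ have degree $2$ with prescribed neighbours and $v_\ell$ has degree $1$, the only neighbour of $v_0$ lying on the pendant path is $v_1$, so at least two further neighbours of $v_0$ sit in $W\setminus\{v_0\}$ and each contribute at least $2^{-1}$ to $\beta$; together with the contribution $1$ from $v_0$ itself we obtain $\beta\ge 2$. This forces the right-hand side to be strictly positive and completes the proof.
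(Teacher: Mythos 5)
Your proposal is correct and follows essentially the same route as the paper: both isolate the contribution of $v_i$ switching from the tail to the main piece, evaluate the stub and tail by geometric sums, and reduce to the same identity $2^{1-i}(\beta-2)+2^{i-\ell+1}$, with the bound $\beta\ge 2$ (from $\delta_G(v_0)\ge 3$) being exactly the inequality the paper uses implicitly in its estimate $\sum_{u\in V_0}2^{-d(u,v_0)}\ge 1$. Your write-up in fact makes that implicit step explicit, but it is not a different argument.
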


\begin{proof}
Let $V_0=V(G)\setminus\{v_i:i=0,\dots, \ell\}$, $V_1=V_0\cup \{v_j:j=0,\dots,i-1\}$ and $V_2=\{v_j:j=i+1,\dots, \ell\}$.
As we pass from $G-v_{i-1}v_{i}$ to $G-v_iv_{i+1}$, the distance between any pair of vertices in $V_1$ and $V_2$ remains  unchanged.
Then
\begin{align*}
&\quad C(G-v_iv_{i+1})-C(G-v_{i-1}v_i)\\
&=2\sum_{u\in V_1}2^{-d_{G-v_iv_{i+1}}(u,v_i)}-2\sum_{u\in V_2}2^{-d_{G-v_{i-1}v_i}(u,v_i)}.
\end{align*}
Note that $V_0\subset V_1$ and for any $u\in V_0$,
\begin{align*}
d_{G-v_iv_{i+1}}(u,v_i)&=d_{G-v_iv_{i+1}}(u,v_0)+d_{G-v_iv_{i+1}}(v_0,v_i)\\
&=d_{G-v_iv_{i+1}}(u,v_0)+i.
\end{align*}
Then
\begin{align*}
&\quad C(G-v_iv_{i+1})-C(G-v_{i-1}v_i)\\
&=2\sum_{u\in V_0}2^{-d_{G-v_iv_{i+1}}(u,v_0)-i}+\sum_{j=0}^{i-1}2^{-j}-\sum_{j=0}^{\ell-i-1}2^{-j}\\
&=2^{-i+1}\sum_{u\in V_0}2^{-d_{G-v_iv_{i+1}}(u,v_0)}+(2-2^{-i+1})\\
&\quad  -(2-2^{-\ell+i+1})\\
&\ge 2^{-i+1}+(2-2^{-i+1})-(2-2^{-\ell+i+1})\\
&=2^{-\ell+i+1}\\
&>0.
\end{align*}
So the result follows.
\end{proof}


\section{Connectivity and edge connectivity}

The (vertex) connectivity $\kappa(G)$ of a  graph $G$ is defined as the minimum
number of vertices whose removal from $G$ results in a disconnected graph or
in the trivial graph. If $G$ is trivial or disconnected, then $\kappa(G)=0$. For a connected graph $G$ of order $n\ge 2$,  $1\le \kappa(G)\le n-1$, and
$\kappa(G)=n-1$ if and only if $G\cong K_n$.

\begin{theorem}\label{conn}
Let $G$ be an $n$-vertex graph with connectivity $k$, where $1\le k\le n-2$. Then
\[
R^L(G)\le \begin{cases}
\frac{2n^2-7n+13}{4}& \mbox{if $k=1$ and $n=5, \dots, 9$}  \\
\frac{n^2-3n+2}{2} & \mbox{if $k=1$ and $n=3, 4$ or $n\ge 9$}\\
\frac{n^2-2n+k}{2} & \mbox{if $k\ge 2$}
\end{cases}
\]
with equality if and only if $G\cong K_1\vee (K_2\cup K_{n-3})$ when $k=1$ and $5\le n\le 8$, $G\cong  K_1\vee(K_1\cup K_{n-2}), K_1\vee (K_2\cup K_{n-3})$ when  $k=1$ and $n=9$, and
$G\cong K_k\vee(K_1\cup K_{n-k-1})$ otherwise.
\end{theorem}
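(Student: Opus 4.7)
The plan is to reduce the maximization to a tractable family of graphs using Lemma~\ref{AddE}, compute $R^L$ on that family by Lemma~\ref{kst}, and then perform a small finite optimization. The delicate step will be the $k=1$ case, where two different branches of Lemma~\ref{kst} produce competing candidates whose ordering depends on $n$.

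\textbf{Structural reduction.} Let $G$ attain the maximum. Fix a minimum vertex cut $S \subseteq V(G)$ with $|S| = k$ and let $C_1, \ldots, C_t$ ($t \ge 2$) be the components of $G - S$. Form $G^\ast$ by adding to $G$ every missing edge inside $S$, between $S$ and each $C_i$, and inside each $C_i$, so that $G^\ast = K_k \vee (K_{|C_1|} \cup \cdots \cup K_{|C_t|})$. Then $S$ is still a cut of size $k$ in $G^\ast$, while any $T \subseteq V(G^\ast)$ with $|T| < k$ leaves some $v \in V(K_k) \setminus T$ adjacent in $G^\ast$ to every other vertex outside $T$, so $G^\ast - T$ is connected. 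Hence $\kappa(G^\ast) = k$, and Lemma~\ref{AddE} gives $R^L(G) \le R^L(G^\ast)$ with equality only when $G = G^\ast$. The extremal graph therefore has the form $K_k \vee (K_{n_1} \cup \cdots \cup K_{n_t})$ with $t \ge 2$, $1 \le n_1 \le \cdots \le n_t$, and $\sum_{i=1}^t n_i = n - k$.

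\textbf{Case $k \ge 2$.} Lemma~\ref{kst} expresses $R^L$, up to terms depending only on $n$ and $k$, as $\frac{1}{4}\sum n_i^2$. The sum of squares is maximized over admissible partitions by configurations of the form $(1, \ldots, 1, n - k - t + 1)$, giving values $h(t) := t - 1 + (n - k - t + 1)^2$, a convex function of $t$; comparing $h(2)$ and $h(n - k)$ shows $h(2)$ wins once $n - k \ge 3$, and $n - k = 2$ forces $t = 2$. Substituting $(n_1, n_2) = (1, n - k - 1)$ yields $R^L = \frac{n^2 - 2n + k}{2}$, attained uniquely by $K_k \vee (K_1 \cup K_{n - k - 1})$.

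\textbf{Case $k = 1$.} Lemma~\ref{kst} branches on whether $n_1 = 1$ or $n_1 \ge 2$. If $n_1 = 1$, a similar sum-of-squares argument forces $t = 2$ and $n_2 = n - 2$, giving candidate $K_1 \vee (K_1 \cup K_{n - 2})$ with $R^L = \frac{n^2 - 3n + 2}{2}$. If $n_1 \ge 2$ (which requires $n \ge 5$), the quantity to maximize is $\frac{1}{4}\sum n_i^2 + \frac{n_1}{4}$; restricted to $t = 2$ this becomes $g(n_1) = \frac{1}{2} n_1^2 - \frac{2n - 3}{4} n_1 + \frac{(n - 1)^2}{4}$, a convex quadratic whose maximum on $\{2, \ldots, \lfloor (n - 1)/2 \rfloor\}$ sits at an endpoint, and a direct endpoint comparison gives $n_1 = 2$; a separate check using the convexity of $h$ rules out $t \ge 3$. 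This produces the candidate $K_1 \vee (K_2 \cup K_{n - 3})$ with $R^L = \frac{2n^2 - 7n + 13}{4}$. The difference of the two candidate values is $\frac{9 - n}{4}$: positive for $n \le 8$, zero at $n = 9$, negative for $n \ge 10$. Combined with the $n \ge 5$ requirement on the second candidate and strict inequality in Lemma~\ref{AddE}, this delivers the three regimes and the uniqueness statement exactly as claimed; the main obstacle throughout is keeping the $k=1$ analysis organized and pinpointing the crossover at $n = 9$ where both candidates tie.
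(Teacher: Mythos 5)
Your proposal is correct and takes essentially the same route as the paper: reduce via Lemma~\ref{AddE} to a graph of the form $K_k\vee(K_{n_1}\cup\dots\cup K_{n_t})$ of connectivity $k$, evaluate with Lemma~\ref{kst}, and optimize the resulting quadratics, with the crossover between $K_1\vee(K_1\cup K_{n-2})$ and $K_1\vee(K_2\cup K_{n-3})$ at $n=9$ via the difference $(9-n)/4$. The only differences are cosmetic: the paper completes all of $V(G)\setminus(V_0\cup V_1)$ into one clique so the maximizer has exactly two inner parts, whereas you keep $t$ parts and dispose of $t\ge 3$ by the sum-of-squares/convexity check, and the constant term of your $g(n_1)$ should be $\tfrac{1}{4}(2n^2-3n-1)$ rather than $\tfrac{1}{4}(n-1)^2$, a slip that affects neither the argmax nor the final values you state.
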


\begin{proof}	
Suppose that $G$ is an $n$-vertex  graph with connectivity $k$ that maximizes the link residual closeness.

By the definition of connectivity, there is a vertex subset
$V_0$ with $|V_0|=k$ so that
$G-V_0$ is disconnected.
Assume that $G_1$ is a component of $G-V_0$.
Let $V_1=V(G_1)$ and $V_2=V(G)-V_0-V_1$.
Let $n_i=|V_i|$ for $i=1,2$. Assume that $n_1\le n_2$.
As $n_1+n_2=n-k$, one has $n_1\le \lfloor\frac{n-k}{2} \rfloor$. By Lemma \ref{AddE}, $G[V_0\cup V_i]$ is complete for $i=1,2$.
That is,  $G\cong K_k\vee(K_{n_1}\cup K_{n-k-n_1})$.
By Lemma \ref{kst}, one has
\[
R^L(G)=\begin{cases}
\frac{n^2-3n+2}{2} & \mbox{if $k=n_1=1$},\\
f(n_1)& \mbox{if $k=1$ and $n_1\ge 2$},\\
g_k(n_1) & \mbox{otherwise},
\end{cases}
\]
where
\begin{align*}
f(x)
&=\frac{1}{4} (x^2+(n-1-x)^2)+\frac{1}{4}n^2-\frac{1}{4}n+\frac{1}{4}x-\frac{1}{2}\\
&=\frac{1}{2}\left(x^2-\left(n-\frac{3}{2}\right)x\right)+\frac{1}{4}(2n^2-3n-1)
\end{align*}
with $2\le x\le \lfloor\frac{n-1}{2}\rfloor$,
and
\begin{align*}
&\quad g_k(x)\\
&=\frac{1}{4}(x^2+(n-k-x)^2)+\frac{1}{4}(n^2-k^2)+\frac{k-1}{2}n-\frac{1}{2}\\
&=\frac{1}{2}(x^2-(n-k)x)+\frac{1}{2}(n^2-n-1)
\end{align*}
with $2\le x\le \lfloor\frac{n-k}{2}\rfloor$.
It is easy to see that $f(x)$ is strictly decreasing for $2\le x\le \frac{2n-3}{4}$, $f(\frac{n-1}{2})=f(\frac{n-2}{2})$ and
$g_k(x)$ is strictly  decreasing for $1\le x\le \lfloor\frac{n-k}{2}\rfloor$.

\noindent
{\bf Case 1.} $k=1$.

If $n_1\ge 3$, then $R^L(G)=f(n_1)<f(2)=R^L(K_1\vee(K_2\cup K_{n-3}))$, a contradiction. So $n_1=1,2$.

If $n=3,4$, then $G\cong K_1\vee (K_1\cup K_{n-2})$ with $R^L(G)=\frac{n^2-3n+2}{2}$.

Suppose that  $n\ge 5$.  Note that
$f(2) =\frac{2n^2-7n+13}{4}<  \frac{n^2-3n+2}{2}$ if and only if $n>9$.
and $f(2)=\frac{n^2-3n+2}{2}$ if and only if $n=9$.
Thus
\[
n_1=\begin{cases}
2 & \mbox{if  $n=5,\dots, 8$},\\
1,2 & \mbox{if $n=9$},\\
1 & \mbox{if $n\ge 10$}.
\end{cases}
\]
Correspondingly,
\[
G\cong K_2\vee(K_1\cup K_{n-3}) \mbox{ if }n=5,\dots, 8,
\]
\[
G\cong K_1\vee(K_1\cup K_{n-2}), K_1\vee(K_2\cup K_{n-3}) \mbox{ if }n=9,
\]
and
\[
G\cong K_1\vee(K_1\cup K_{n-2}) \mbox{ if }n\ge 10.
\]

\noindent
{\bf Case 2.} $k\ge 2$.

If $n_1\ge 2$, then $R^L(G)=g_k(n_1)<g_k(1)$,
a contradiction.
So  $n_1=1$,  $G\cong K_k\vee(K_1\cup K_{n-k-1})$ and $R^L(G)=\frac{n^2-2n+k}{2}$.
\end{proof}

 Suppose that  $G$ is an $n$-vertex  graph with connectivity at most $k$ that maximizes the link residual closeness, where $1\le k\le n-2$ and $n\ge 5$. By Lemma \ref{AddE}, $G$ is connected. So we have the following corollary from previous theorem.

\begin{corollary}\label{c1}
Let $G$ be an $n$-vertex  graph with connectivity at most $k$, where $1\le k\le n-2$ and $n\ge 5$. Then
\[
R^L(G)\le \begin{cases}
\frac{2n^2-7n+13}{4}& \mbox{if $k=1$ and $n=5, \dots, 9$}\\
\frac{n^2-3n+2}{2} & \mbox{if $k=1$ and $n\ge 9$} \\
\frac{n^2-2n+k}{2} & \mbox{if } k\ge 2
\end{cases}
\]
with equality if and only if $G\cong K_2\vee (K_1\cup K_{n-3})$ when $k=1$ and $n\le 8$, $G\cong  K_1\vee(K_1\cup K_{n-2}), K_2\vee(K_1\cup K_{n-3})$ when $k=1$ and $n=9$, and
$G\cong K_k\vee(K_1\cup K_{n-k-1})$ otherwise.
\end{corollary}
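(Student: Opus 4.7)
The plan is to deduce Corollary~\ref{c1} from Theorem~\ref{conn} by a short reduction. Let $G$ be an $n$-vertex graph with $\kappa(G)\le k$ achieving the maximum value of $R^L$ in this class.

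First, I would argue that $G$ is connected. If not, then $\kappa(G)=0$; pick $u,v$ in two different components of $G$ and form $G+uv$. Because adding a single edge raises the vertex connectivity by at most one, $\kappa(G+uv)\le 1\le k$, so $G+uv$ still lies in the class. Lemma~\ref{AddE} then yields $R^L(G+uv)>R^L(G)$, contradicting the extremality of $G$. Hence $G$ is connected, so $\kappa(G)=k'$ for some $1\le k'\le k$, and Theorem~\ref{conn} provides an upper bound $B(k')$ on $R^L(G)$ together with the characterization of equality.

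Next I would maximize $B(k')$ over $1\le k'\le k$. When $k=1$ the only option is $k'=1$, so the Theorem~\ref{conn} bound and its extremal graphs transfer verbatim, distinguishing the cases $5\le n\le 8$, $n=9$, and $n\ge 10$. When $k\ge 2$, the quantity $B(k')=\frac{n^2-2n+k'}{2}$ is strictly increasing in $k'$ on $\{2,\dots,k\}$, so its maximum there is $B(k)=\frac{n^2-2n+k}{2}$. A direct comparison with $B(1)$ in each Theorem~\ref{conn} subcase gives
\[
B(k)-\frac{n^2-3n+2}{2}=\frac{n+k-2}{2}>0,\qquad B(k)-\frac{2n^2-7n+13}{4}=\frac{3n+2k-13}{4}>0,
\]
valid for all $k\ge 2$ and $n\ge 5$. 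Thus the maximum is attained only at $k'=k$, forcing $G\cong K_k\vee(K_1\cup K_{n-k-1})$ by Theorem~\ref{conn}.

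The only non-routine ingredient is the connectedness reduction, which combines Lemma~\ref{AddE} with the standard fact that a single edge addition changes the vertex connectivity by at most one; everything else is bookkeeping on the three bounds from Theorem~\ref{conn}.
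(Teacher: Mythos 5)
Your proposal is correct and follows essentially the same route as the paper: Lemma \ref{AddE} shows a maximizer must be connected, and then Theorem \ref{conn}, maximized over the actual connectivity $k'\le k$, gives the bound; your explicit comparisons $\frac{n+k-2}{2}>0$ and $\frac{3n+2k-13}{4}>0$ merely spell out the bookkeeping the paper leaves implicit. One remark: for $k=1$ the extremal graph that transfers from Theorem \ref{conn} (and that your argument yields) is $K_1\vee(K_2\cup K_{n-3})$, which indeed has connectivity $1$, whereas the corollary's printed equality case $K_2\vee(K_1\cup K_{n-3})$ has connectivity $2$ and so cannot lie in the class when $k=1$ --- this is evidently a typo in the statement (carried over from the last display in the proof of Theorem \ref{conn}), not a gap in your argument.
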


The edge connectivity $\kappa'(G)$ of a graph $G$ is defined as the minimum
number of edges whose removal from $G$ results in a disconnected graph or in
the trivial graph. If $G$ is trivial or disconnected, then $\kappa'(G)=0$. For a connected graph $G$ of order $n\ge 2$, $1\le \kappa'(G)\le n-1$ and $\kappa'(G)=n-1$ if and only if $G\cong K_n$.

\begin{lemma}\label{ks}
For positive integers $s$ and $n$ with $2s\le n$,
let $G$ be a graph obtained from $K_s\cup K_{n-s}$ by adding an edge.
Then \[
R^L(G)\le \frac{n^2-3n+2}{2}
\]
with equality if and only if $s=1$.
\end{lemma}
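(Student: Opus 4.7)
My plan is to split on whether $s=1$ or $s\ge 2$, exploiting the observation that the single edge added to $K_s\cup K_{n-s}$ is necessarily a bridge of $G$.

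For $s=1$, the graph $G$ is uniquely determined up to isomorphism as $K_{n-1}$ with a pendant vertex attached, i.e., $G\cong K_1\vee(K_1\cup K_{n-2})$ (the unique neighbor of the pendant vertex plays the role of the apex of the join). I would then invoke Lemma \ref{kst} with $n_0=n_1=1$, $n_2=n-2$ and $t=2$; the first branch of the formula applies (since $n_1=1$) and yields
\[
R^L(G)=\tfrac{1}{4}(n-2)^2+\tfrac{1}{4}n^2-\tfrac{1}{2}n=\tfrac{n^2-3n+2}{2},
\]
which is exactly the claimed equality.

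For $s\ge 2$, let $e_0=uv$ be the added edge, with $u\in V(K_s)$ and $v\in V(K_{n-s})$. The heart of the argument is to bound $R^L(G)$ from above using this specific edge: since $G-e_0=K_s\cup K_{n-s}$ is a disjoint union of two cliques (with distances $1$ within each component and cross-component contributions vanishing),
\[
R^L(G)\le C(G-e_0)=\binom{s}{2}+\binom{n-s}{2}=\binom{n}{2}-s(n-s).
\]
Comparing with $\binom{n-1}{2}=\tfrac{n^2-3n+2}{2}$, the desired strict inequality reduces to the elementary $s(n-s)>n-1$. Since $s(n-s)$ is concave in $s$ and the hypothesis gives $2\le s\le n/2$, I would minimize at $s=2$ and check $2(n-2)>n-1$, which follows from $n\ge 2s\ge 4$.

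I do not anticipate a real obstacle. The substantive idea is just the observation that deleting the bridge gives a natural (and easy-to-compute) upper bound on $R^L$; the rest is a small identity plus a one-line inequality. The strict inequality for $s\ge 2$, combined with the equality at $s=1$, pins down $s=1$ as the unique case of equality.
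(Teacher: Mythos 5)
Your proposal is correct, but it takes a noticeably more economical route than the paper. The paper's proof does the heavy lifting of identifying the critical edge: it checks case by case (edges at $u$ inside $K_s$, at $v$ inside $K_{n-s}$, and edges avoiding $u,v$) that $C(G-e)\ge C(G-uv)$ for every $e\in E(G)$, concludes the exact value $R^L(G)=C(K_s)+C(K_{n-s})=s^2-ns+\frac{n^2-n}{2}$, and then uses monotonicity of this expression in $s$ on $[1,n/2]$ to get the bound and the equality case. You bypass the minimizing-edge analysis entirely for $s\ge 2$ by using only the definitional bound $R^L(G)\le C(G-e_0)$ with $e_0$ the bridge, which gives $\binom{n}{2}-s(n-s)<\binom{n-1}{2}$ since $s(n-s)\ge 2(n-2)>n-1$ for $n\ge 2s\ge 4$; and for $s=1$ you identify $G\cong K_1\vee(K_1\cup K_{n-2})$ and read off the exact value $\frac{n^2-3n+2}{2}$ from Lemma~\ref{kst}, which together pin down equality exactly at $s=1$. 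What your approach buys is brevity (no case analysis over all edges); what the paper's buys is the exact value of $R^L(G)$ for every $s$ and the identification of the critical edge, though that extra information is not used elsewhere. One triviality: the lemma permits $n=2$, $s=1$, where your appeal to Lemma~\ref{kst} does not literally apply (it needs $n_2=n-2\ge 1$); there $G\cong K_2$, $R^L(G)=0=\frac{n^2-3n+2}{2}$, so the equality claim still holds and this is a one-line check.
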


\begin{proof}
Let $u\in V(K_s)$ and $v\in V(K_{n-s})$ so that $uv\in E(G)$.
Let $V_1=V(K_s)\setminus\{u\}$ and  $V_2=V(K_{n-s})\setminus \{v\}$.
If $e=uv$, then
\[
C(G-e)=C(K_s)+C(K_{n-s}). 
\]
If $e=uw$ with $w\in V_1$, then we have either
$s=2$ and $C(G-e)=C(K_{n-2})+1+\frac{n-2}{2}>C(K_{n-2})+C(K_2)$
or $s\ge 3$ and
\begin{align*}
&\quad C(G-e)\\
&=C(K_s)-\frac{1}{2}+C(K_{n-s})+1+\frac{n-s-1}{2}+\frac{s-2}{2}\\
&\quad +\frac{1}{4}+\frac{(s-2)(n-s-1)}{4}+\frac{n-s-1}{8}\\
&=C(K_s)+C(K_{n-s})+\frac{s(n-s-1)}{4}+\frac{s-1}{2}+\frac{1}{4}\\
&\quad +\frac{n-s-1}{8} \\
&>C(K_s)+C(K_{n-s}).
\end{align*}
Similarly, if $e=vw$ with $w\in V_2$, then by direct calculation, we have
\[
C(G-e)>C(K_s)+C(K_{n-s}).
\]
If $e=wz$  with $w,z\in V_1$, then
\begin{align*}
&\quad C(G-e)\\
&=C(K_s)-\frac{1}{2}+C(K_{n-s})+1+\frac{n-s-1}{2}+\frac{s-1}{2}\\
&\quad +\frac{(s-1)(n-s-1)}{4}\\
&>C(K_s)+C(K_{n-s}).
\end{align*}
Similarly, if $e=wz$ with $w,z\in V_2$, then
\[
C(G-e)>C(K_s)+C(K_{n-s}).
\]
So
\begin{align*}
R^L(G)&=C(G-uv)=C(K_s)+C(K_{n-s})\\
&=s^2-ns+\frac{n^2-n}{2}.
\end{align*}
Let $f(s)=s^2-ns+\frac{n^2-n}{2}$.
Since $2s\le n$, $f(s)$ is strictly decreasing and hence $f(s)\le f(1)$ with equality if and only if $s=1$.
The result follows.
\end{proof}

\begin{theorem}\label{edgeconn}
Let $G$ be an $n$-vertex  graph with edge connectivity at most $r$, where $1\le r\le n-2$.
Then
\[
R^L(G)\le \begin{cases}
	\frac{n^2-3n+2}{2}&\mbox{ if }r=1\\
	\frac{n^2-2n+r}{2}&\mbox{ if }r\ge 2
\end{cases}
\]
with equality if and only if $G\cong K_r\vee(K_1\cup K_{n-r-1})$.
\end{theorem}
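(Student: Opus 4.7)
The plan is to split the argument into the cases $r = 1$ and $r \ge 2$, reducing each to a previously established result.

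For $r = 1$, I would first establish that any extremal $G$ is connected with $\kappa'(G) = 1$: if $G$ were disconnected, then any edge $uv$ joining two components yields $G + uv$ connected with $uv$ as a bridge, so $\kappa'(G + uv) = 1 \le r$ and Lemma \ref{AddE} gives $R^L(G + uv) > R^L(G)$, contradicting extremality. Hence there is a bridge $e$ of $G$ whose deletion partitions $V(G)$ into $V_1 \cup V_2$ with $|V_1| = s \le n - s$. Adding any non-edge within $V_1$ or $V_2$ preserves $e$ as a bridge (so $\kappa'$ stays equal to $1$), so Lemma \ref{AddE} forces $G[V_i] = K_{|V_i|}$. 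Thus $G$ is exactly the graph treated in Lemma \ref{ks}, whose conclusion gives $R^L(G) \le \frac{n^2 - 3n + 2}{2}$ with equality iff $s = 1$, i.e., $G \cong K_1 \vee (K_1 \cup K_{n-2})$.

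For $r \ge 2$, I would use the inequality $\kappa(G) \le \kappa'(G) \le r$ and invoke Theorem \ref{conn} with $k := \kappa(G)$. When $k \ge 2$, Theorem \ref{conn} gives $R^L(G) \le \frac{n^2 - 2n + k}{2} \le \frac{n^2 - 2n + r}{2}$, with equality throughout iff $k = r$ and $G \cong K_r \vee (K_1 \cup K_{n-r-1})$. When $k = 1$, Theorem \ref{conn} gives $R^L(G) \le \max\{\frac{2n^2 - 7n + 13}{4}, \frac{n^2 - 3n + 2}{2}\}$, and a direct algebraic check shows both quantities are strictly less than $\frac{n^2 - 2n + r}{2}$ once $r \ge 2$ and $n \ge r + 2 \ge 4$. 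Taking the maximum over $k$ isolates $G \cong K_r \vee (K_1 \cup K_{n-r-1})$, which has minimum degree $r$ at its $K_1$-vertex and hence edge connectivity exactly $r$, confirming that the constraint $\kappa' \le r$ is actually satisfied at the extremum.

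The main subtlety is that the $r = 1$ case cannot be handled by the same reduction to Theorem \ref{conn}: the $k = 1$ extremal in Theorem \ref{conn} for $5 \le n \le 9$ is $K_1 \vee (K_2 \cup K_{n-3})$, which has edge connectivity $2$ and therefore lies outside our class $\kappa' \le 1$. This forces the separate bridge-based argument via Lemma \ref{ks}, which is precisely what pins the extremum down to $K_1 \vee (K_1 \cup K_{n-2})$.
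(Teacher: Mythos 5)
Your argument is correct in substance; the two halves relate to the paper differently. For $r=1$ you follow essentially the paper's route: the extremal graph is connected, has a bridge $e$, Lemma \ref{AddE} forces both components of $G-e$ to be complete, and Lemma \ref{ks} finishes; the only slip is that when $G$ has three or more components $G+uv$ need not be connected, but since $\kappa'(G+uv)\le 1$ in any case, your extremality contradiction survives unchanged. For $r\ge 2$ your route is genuinely different from the paper's: the paper fixes an extremal $G$, assumes $\kappa(G)=k<r$, and derives a contradiction by explicitly building a supergraph $H'$ of the connectivity-$k$ extremal graph with $\kappa'(H')=r$ and invoking Lemma \ref{AddE}, which avoids any numerical comparison but costs several ad hoc constructions and the verification that $\kappa'(H')=r$. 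You instead bound $R^L(G)$ directly through Theorem \ref{conn} applied with $k=\kappa(G)\le\kappa'(G)\le r$ and compare values: $\frac{n^2-2n+k}{2}$ is increasing in $k$ for $2\le k\le r$, and for $k=1$ both $\frac{2n^2-7n+13}{4}$ and $\frac{n^2-3n+2}{2}$ lie strictly below $\frac{n^2-2n+r}{2}$ (the differences are $\frac{3n+2r-13}{4}\ge\frac{3(n-3)}{4}>0$ and $\frac{n+r-2}{2}>0$ once $n\ge r+2\ge 4$), so equality forces $k=r$ and $G\cong K_r\vee(K_1\cup K_{n-r-1})$, which indeed has $\kappa'=r$ and so lies in the class; this is shorter and delivers the equality characterization at once. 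The one point you must patch: for $r\ge 2$ your case analysis starts at $k=1$, but the class also contains disconnected graphs, for which $\kappa(G)=0$ and Theorem \ref{conn} says nothing. Dispose of them with the same device you used for $r=1$: adding an edge between two components keeps $\kappa'\le 1\le r$ and strictly increases $R^L$ by Lemma \ref{AddE}, so after finitely many such additions one reaches a connected graph in the class with connectivity $1$, whence disconnected graphs satisfy the strict inequality and never attain the maximum; the paper covers this with its one-line remark that the extremal $G$ is connected by Lemma \ref{AddE}.
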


\begin{proof}
Suppose that $G$ is an $n$-vertex graph with edge connectivity at most $r$ that maximizes the link residual closeness.

By Lemma \ref{AddE}, $G$ is connected.

If $r=1$, then $G$ has a (cut) edge, say $e$, and we have by Lemma \ref{AddE} that
 $G-e\cong K_s\cup K_{n-s}$ for some $s\ge 1$,  so it follows from
Lemma \ref{ks} that $G\cong K_1\vee(K_1\cup K_{n-2})$ with $R^L(G)=\frac{n^2-3n+2}{2}$.

Suppose that $r\ge 2$. Then $n\ge 4$.
Let $\kappa(G)=k$.
Then $k\le r$. We claim that $k=r$.
Suppose to the contrary that $k<r$.

If $k=1$ and $n=4$, then $r=2$, and it is easy to see that there is no such graphs. So $n\ge 5$ if $k=1$.  Suppose first that $k=1$ and  $n=5,\dots, 8$.
Let $H=K_1\vee (K_2\cup K_{n-3})$.
By Theorem \ref{conn}, $R^L(G)\le R^L(H)$.
Let $u_1,u_2$ be the vertices of $K_2$ (the $1$st inner copy of $H$).
Let $W$ be  a subset or vertices of $K_{n-3}$ (the $2$nd inner copy of $H$) with $|W|=r-2$ if $r\ge 3$.
Let $H'=H+\{u_1w:w\in V(K_{n-3}) \}$ if $r=2$, and
$H'=H+\{u_1w:w\in V(K_{n-3}) \}+\{u_2w:w\in W \}$ if $r\ge 3$.
Evidently, $u_2$ is the only vertex of $H'$ with minimum degree $r$ and each other vertex has degree at least $n-2$.
It is easy to see that $\kappa'(H')=r$.
By Lemma \ref{AddE}, $R^L(H')>R^L(H)\ge R^L(G)$, a contradiction.

Suppose next that $k=1$ and $n\ge 9$ or $k\ge 2$.
Note that $1\le r-k\le n-k-2$.
Let $H=K_k\vee(K_1\cup K_{n-k-1})$.
By Theorem \ref{conn}, $R^L(G)\le R^L(H)$.
Let $u$ be the vertex with degree $k$  and $W$ be a subset of vertices of $K_{n-k-1}$ (the $2$nd inner copy of $H$) with $|W|=r-k$.
Let $H'=H+\{uw: w\in W \}$. Evidently, $u$ is the only vertex of $H'$ with minimum degree $r$ and each other vertex has degree at least $n-2$.
It is easy to see that $\kappa'(H')=r$.
By Lemma \ref{AddE}, $R^L(H')>R^L(H)\ge R^L(G)$, also a contradiction.

Now we conclude that $k=r$, so by Theorem \ref{conn}, we have
$G\cong K_r\vee(K_1\cup K_{n-r-1})$ with $R^L(G)=\frac{n^2-2n+r}{2}$.
\end{proof}

Noting that $\kappa'(G)\le \delta(G)$ for a connected graph $G$ with minimum degree $\delta (G)$ and by similar argument as in Theorem \ref{edgeconn}, we have the following result.

\begin{corollary}
Let $G$ be an $n$-vertex graph with minimum degree at most $\delta$, where $1\le \delta\le n-2$.
Then
\[
R^L(G)\le \begin{cases}
	\frac{n^2-3n+2}{2}&\mbox{ if }\delta=1\\
	\frac{n^2-2n+\delta}{2}&\mbox{ if }\delta\ge 2
\end{cases}
\]
with equality if and only if $G\cong K_\delta\vee (K_1\cup K_{n-\delta-1})$.
\end{corollary}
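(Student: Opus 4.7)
The plan is to deduce the corollary directly from Theorem \ref{edgeconn} via the standard inequality $\kappa'(G)\le \delta(G)$, which holds for every graph $G$ (with the convention that $\kappa'(G)=0$ when $G$ is disconnected, and is witnessed in the connected case by cutting all edges incident to a minimum-degree vertex). This is the essence of the ``similar argument'' alluded to in the authors' remark preceding the statement.

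Concretely, let $G$ be any $n$-vertex graph with $\delta(G)\le \delta$. Then $\kappa'(G)\le \delta(G)\le \delta$, so $G$ is an $n$-vertex graph with edge connectivity at most $\delta$, and Theorem \ref{edgeconn} applied with $r=\delta$ yields precisely the claimed upper bound in both branches: $\frac{n^2-3n+2}{2}$ when $\delta=1$ and $\frac{n^2-2n+\delta}{2}$ when $\delta\ge 2$. Moreover, Theorem \ref{edgeconn} states that equality forces $G\cong K_\delta\vee(K_1\cup K_{n-\delta-1})$ (which, when $\delta=1$, collapses to $K_1\vee(K_1\cup K_{n-2})$).

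To complete the characterization I only need to verify that this extremal graph actually lies in the admissible family, i.e., that it has minimum degree at most $\delta$. In $K_\delta\vee(K_1\cup K_{n-\delta-1})$ the unique vertex of the $K_1$ inner copy is joined only to the $\delta$ vertices of $K_\delta$, so its degree equals $\delta$; every other vertex has degree $n-2$ (those in $K_{n-\delta-1}$) or $n-1$ (those in $K_\delta$). Hence the minimum degree is exactly $\delta$, so the extremal graph is admissible and the bound is tight.

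There is essentially no obstacle beyond invoking Theorem \ref{edgeconn} and performing the trivial degree computation above; the monotonicity in $r$ already built into Theorem \ref{edgeconn} takes care of the possibility that $\kappa'(G)$ is strictly less than $\delta$, since the bound $\frac{n^2-2n+r}{2}$ is strictly increasing in $r$ and therefore no graph with smaller edge connectivity can match the target value unless it is already the graph named in the extremality clause.
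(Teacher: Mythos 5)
Your proposal is correct and takes essentially the same route as the paper: the paper obtains this corollary precisely by noting $\kappa'(G)\le \delta(G)$ and appealing to Theorem~\ref{edgeconn}, and your check that $K_\delta\vee(K_1\cup K_{n-\delta-1})$ has minimum degree exactly $\delta$ settles the equality case as intended.
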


\section{Bipartiteness}

The bipartiteness of a graph $G$ is the minimum number of vertices whose deletion from $G$  yields a bipartite graph.

\begin{theorem} \label{adj}
Let $G$ be a bipartite graph on $n\ge 3$ vertices. Then
\[
R^L(G)\le \frac{n^2-n-3}{4}+\frac{1}{2}\left \lfloor \frac{n^2}{4}\right\rfloor
\]
with equality if and only if $G\cong K_{\lfloor n/2\rfloor, \lceil n/2\rceil}$.
\end{theorem}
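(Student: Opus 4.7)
The plan is to use Lemma \ref{AddE} to reduce to the complete bipartite case, then compute and optimize $R^L(K_{a,b})$ explicitly. First, any bipartite graph $G$ with bipartition $A\cup B$ where $|A|=a$ and $|B|=b=n-a$ is a spanning subgraph of $K_{a,b}$, which is itself bipartite; by Lemma \ref{AddE}, filling in the missing edges strictly increases $R^L$, so the maximizer must be of the form $K_{a,b}$ for some $1\le a\le \lfloor n/2\rfloor$.

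A direct count of vertex pairs at distance $1$ and $2$ yields
\[
C(K_{a,b})=ab+\frac{a(a-1)+b(b-1)}{4}=\frac{2ab+n^2-n}{4}.
\]
When $a,b\ge 2$, deleting any edge $uv$ (with $u\in A$ and $v\in B$) increases $d(u,v)$ from $1$ to $3$: any shortest alternative path from $u$ to $v$ must first leave $u$ into $B\setminus\{v\}$ and then re-enter $v$ through $A\setminus\{u\}$. All other pair distances are unchanged, since any two vertices in the same part still share a common neighbor in the opposite part (using $\min\{a,b\}\ge 2$) and every other cross-pair is still joined by an edge. Consequently
\[
R^L(K_{a,b})=C(K_{a,b})-\frac{3}{4}=\frac{2ab+n^2-n-3}{4}.
\]

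The remaining step is to maximize over $a+b=n$ with $a,b\ge 2$: the product $ab$ attains its maximum $\lfloor n^2/4\rfloor$ uniquely (up to swapping $a$ and $b$) at the balanced split $\{a,b\}=\{\lfloor n/2\rfloor,\lceil n/2\rceil\}$, and substituting this value reproduces the right-hand side of the claimed bound. It remains to rule out the star $K_{1,n-1}$: here any edge deletion isolates a leaf, so $C(K_{1,n-1}-e)=C(K_{1,n-2})=(n-2)+\frac{(n-2)(n-3)}{4}$, and a short direct comparison (for $n\ge 4$) shows this is strictly smaller than $R^L(K_{2,n-2})$, so the balanced complete bipartite graph is the unique maximizer. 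The only step that requires real care is the distance analysis in $K_{a,b}-e$, but this is a routine case check and poses no serious obstacle.
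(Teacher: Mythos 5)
Your proposal is correct and follows essentially the same route as the paper: reduce to complete bipartite graphs via Lemma \ref{AddE}, note that deleting any edge of $K_{a,b}$ with $a,b\ge 2$ raises exactly one pair distance from $1$ to $3$, so $R^L(K_{a,b})=C(K_{a,b})-\tfrac{3}{4}=\tfrac{n^2-n-3}{4}+\tfrac{ab}{2}$, handle the star separately, and maximize $ab$ at the balanced split. The only cosmetic difference is that you compare the star with $R^L(K_{2,n-2})$ while the paper compares it directly with the stated bound; both work for $n\ge 4$, and both arguments (like the theorem's equality clause itself) gloss over the degenerate case $n=3$, where the only complete bipartite graph is $K_{1,2}$ and the inequality is strict.
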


\begin{proof}
Let $r$ and $s$ be the partite sizes of $G$ with  $1\le r\le s$.
Note that
\begin{align*}
R^L(K_{1,s})&=C(K_{1, s-1})=\frac{(s-1)(s+2)}{4}=\frac{n^2-n-2}{4}\\
&<\frac{n^2-n-3}{4}+\frac{1}{2}\left \lfloor \frac{n^2}{4}\right\rfloor.
\end{align*}
and if $r\ge 2$, then
\begin{align*}
R^L(K_{r,s})&=C(K_{r,s})-2\times \frac{1}{2}+2\times \frac{1}{8}\\
&=\frac{1}{2}{r\choose 2}+\frac{1}{2}{s\choose 2}+rs-\frac{3}{4}\\
&=\frac{r(r-1)}{4}+\frac{s(s-1)}{4}+rs-\frac{3}{4}\\
&=\frac{n^2-n-3}{4}+\frac{rs}{2}\\
&\le \frac{n^2-n-3}{4}+\frac{1}{2}\left \lfloor \frac{n^2}{4}\right\rfloor
\end{align*}
with equality if and only if $r=s$. Now the result follows from Lemma \ref{AddE}.
\end{proof}

\begin{theorem}
Let $G$ be an $n$-vertex graph with bipartiteness $k$, where $1\le k\le n-2$.
Then
\begin{align*}
R(G)& \le \frac{3}{8}n^2-\frac{1}{4}n+\frac{1}{4}nk-\frac{k^2}{8}-\frac{k}{4}\\
&\quad -\begin{cases}
\frac{5}{8}&\mbox{ if  $n-k$ is odd}\\[2mm]
\frac{1}{2}&\mbox{ otherwise}
\end{cases}
\end{align*}
with equality if and only $G\cong K_k\vee K_{\lfloor (n-k)/2 \rfloor,\lceil (n-k)/2\rceil}$.
\end{theorem}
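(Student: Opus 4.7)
The plan is to follow the template used in the proof of Theorem~\ref{conn}: first use Lemma~\ref{AddE} to pin down the structure of an extremal graph, then reduce to a single-parameter optimization. First I would let $G$ be an $n$-vertex graph with bipartiteness at most $k$ that maximizes $R^L$, and fix a witness $V_0\subseteq V(G)$ with $|V_0|=k$ so that $G-V_0$ is bipartite with parts $R,S$ of sizes $r,s$ satisfying $r+s=n-k$. By Lemma~\ref{AddE}, adding any non-edge strictly increases $R^L$, and three families of additions keep $(R,S)$ a bipartition of $G-V_0$ and hence keep the bipartiteness at most $k$: edges inside $V_0$, edges between $V_0$ and $R\cup S$, and edges between $R$ and $S$. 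Saturating these forces $G\cong K_k\vee K_{r,s}$, and a quick comparison (moving a vertex from the larger part to the smaller whenever $r=0$) shows the maximizer satisfies $r,s\ge 1$.

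Next I would compute $R^L(K_k\vee K_{r,s})$ directly. Such a graph has diameter $2$: pairs at distance $1$ are the edges, while pairs at distance $2$ are those inside $R$ or inside $S$. The key observation is that for every $e\in E(G)$, the graph $G-e$ still has diameter $2$ and the only pair whose distance changes is the pair of endpoints of $e$, whose distance rises from $1$ to $2$. This requires exhibiting, for each of the three edge types, an alternative length-$2$ detour; the hypothesis $k\ge 1$ together with $r,s\ge 1$ always supplies one. Consequently $C(G-e)=C(G)-\tfrac12$ uniformly, so $R^L(G)=C(G)-\tfrac12$, and counting pairs yields
\[
C(G)=\binom{k}{2}+k(n-k)+rs+\tfrac12\!\left(\binom{r}{2}+\binom{s}{2}\right).
\]

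Finally, with $r+s=n-k$ fixed, the identity
\[
rs+\tfrac12\!\left(\binom{r}{2}+\binom{s}{2}\right)=\tfrac{rs}{2}+\tfrac{(n-k)^2-(n-k)}{4}
\]
reduces the optimization to maximizing $rs$, which is attained uniquely (up to swapping $R,S$) by the balanced split $r=\lfloor(n-k)/2\rfloor$, $s=\lceil(n-k)/2\rceil$. Substituting $rs=\lfloor(n-k)^2/4\rfloor$ and simplifying produces the two-case formula, with the extra $-\tfrac18$ in the odd case (turning $-\tfrac12$ into $-\tfrac58$) arising from $((n-k)^2-1)/4$ versus $(n-k)^2/4$. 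The main obstacle is the distance-invariance step: one must verify diameter-$2$ preservation under each type of edge deletion, and the subcase $k=1$ is the delicate one, since no alternative vertex inside $V_0$ is available and one must instead route every detour through the opposite part of $K_{r,s}$.
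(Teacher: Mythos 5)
Your proposal is correct and follows essentially the same route as the paper: Lemma~\ref{AddE} forces the extremal graph to be $K_k\vee K_{r,s}$, the key observation that deleting any edge only raises the endpoint distance from $1$ to $2$ gives $R^L(G)=C(G)-\tfrac12$, and the optimization over the split reduces to balancing $r$ and $s$. Your relaxation to bipartiteness at most $k$ is harmless, since the unique maximizer you obtain has bipartiteness exactly $k$, and your explicit verification of the length-$2$ detours (including the $k=1$ case) just makes precise what the paper leaves implicit.
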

\begin{proof}
Suppose that $G$ is an $n$-vertex graph with bipartiteness $k$ that maximizes the link residual closeness.

Let $V_0\subset V(G)$ with $|V_0|=k$ such that $G-V_0$ is bipartite.
Let $V_1,V_2$ be the bipartite sets of $G-V_0$.
By Lemma \ref{AddE}, $G-V_0$ is a complete bipartite graph and $G[V_0\cup V_i]$ is complete for $i=1,2$.
Let $r=|V_1|$ and $s=|V_2|$.
Then $r+s=n-k$ and
we may assume that $G=K_k\vee K_{r,s}$.
 Let $wz\in E(G)$. Then $d_G(w,z)=1$, $d_{G-wz}(w,z)=2$, and
if $\{u,v\}\ne \{w,z\}$, then $d_{G-wz}(u,v)=d_{G}(u,v)$. So
\begin{align*}
R^L(G)&=C(G)-2\times \frac{1}{2}+2\times \frac{1}{4}\\
&={k\choose 2}+\frac{1}{2}{r\choose 2}+\frac{1}{2}{s\choose 2}+k(r+s)+rs-\frac{1}{2}\\
&=k(n-k)+\frac{2k(k-1)+(n-k)(n-k-1)}{4}\\
&\quad +\frac{r(n-k)-r^2}{2}-\frac{1}{2}.
\end{align*}
Denote by $g_k(r)$ the above expression for $R^L(G)$.
Assume that $r\le s$. Then $r\le \lfloor\frac{n-k}{2} \rfloor$.
It is easy to see that $g_k(r)$ is strictly increasing  for  $1\le r\le \lfloor\frac{n-k}{2} \rfloor$.
So $r=\lfloor\frac{n-k}{2} \rfloor$. This is because, if $r< \lfloor\frac{n-k}{2} \rfloor$, then
\begin{align*}
R^L(G)&=g_k(r)<g_k\left(\left\lfloor\frac{n-k}{2} \right\rfloor\right)\\
&=R^L(K_k\vee K_{\lfloor (n-k)/2 \rfloor,\lceil (n-k)/2 \rceil}),
\end{align*}
which is a contradiction.
So $G\cong K_k\vee K_{\lfloor (n-k)/2\rfloor,\lceil (n-k)/2 \rceil}$ with $R^L(G)=g_k\left(\left\lfloor\frac{n-k}{2} \right\rfloor\right)$.
\end{proof}

\section{Independence number and matching number}

The independence number of a graph $G$, denoted by $\alpha(G)$, is the maximum cardinality of an independent set of vertices in $G$. Evidently, the complete graph is the unique one with independence number one.

\begin{theorem}
Let $G$ be an $n$-vertex graph with independent number $\alpha$, where $2\le \alpha\le n-1$. Then
\[
R^L(G)\le
\begin{cases}
\frac{n^2-n-2}{4} & \mbox{if $\alpha=n-1$}\\
\frac{n^2-n-1}{2}-\frac{\alpha^2-\alpha}{4} & \mbox{if $\alpha\le n-2$}
\end{cases}
\]
with equality if and only if $G\cong K_{n-\alpha}\vee \overline{K_\alpha}$.
\end{theorem}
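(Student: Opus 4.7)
The plan is to apply Lemma~\ref{AddE} to reduce to a unique candidate graph, and then read off the value of $R^L$ from Lemma~\ref{kst}. Suppose $G$ is an $n$-vertex graph with independence number $\alpha$ that maximizes $R^L$. Fix a maximum independent set $V_0\subset V(G)$ with $|V_0|=\alpha$, and set $V_1=V(G)\setminus V_0$, so that $|V_1|=n-\alpha$.

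The key observation is that every edge not lying inside $V_0$ may be assumed to be present in $G$. Indeed, if $e$ is a missing edge either within $V_1$ or between $V_0$ and $V_1$, then after adding $e$ the set $V_0$ is still independent, so $\alpha(G+e)\ge \alpha$; conversely, adding any edge never increases the independence number, so $\alpha(G+e)\le \alpha(G)=\alpha$. Thus $\alpha(G+e)=\alpha$, and Lemma~\ref{AddE} gives $R^L(G+e)>R^L(G)$, contradicting the maximality of $R^L(G)$. Iterating this forces $G[V_1]=K_{n-\alpha}$ and every $V_0$--$V_1$ pair to be adjacent, so $G\cong K_{n-\alpha}\vee \overline{K_\alpha}$.

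Finally, I would compute $R^L(K_{n-\alpha}\vee \overline{K_\alpha})$ via Lemma~\ref{kst}. Writing $\overline{K_\alpha}$ as the vertex-disjoint union of $\alpha$ copies of $K_1$, this graph fits the form of Lemma~\ref{kst} with $n_0=n-\alpha$, $t=\alpha\ge 2$, and $n_1=\dots=n_\alpha=1$. When $\alpha=n-1$ (so $n_0=n_1=1$), the first subcase of Lemma~\ref{kst} directly yields $\tfrac{n^2-n-2}{4}$. When $\alpha\le n-2$ (so $n_0\ge 2$), the second formula applies, and substituting $n_i=1$ simplifies $\tfrac{1}{4}\sum_i n_i^2+\tfrac{1}{4}n^2+\tfrac{n_0-1}{2}n-\tfrac{1}{4}n_0^2-\tfrac{1}{2}$ to $\tfrac{n^2-n-1}{2}-\tfrac{\alpha^2-\alpha}{4}$. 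The only real obstacle is bookkeeping: one must take $V_0$ to be a \emph{maximum} (not merely maximal) independent set so that $\alpha(G+e)=\alpha$ is guaranteed after each edge addition, and then invoke Lemma~\ref{kst} correctly in the two subcases $\alpha=n-1$ and $\alpha\le n-2$.
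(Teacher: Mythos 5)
Your proposal is correct and follows essentially the same route as the paper: use Lemma~\ref{AddE} to force all edges inside $V(G)\setminus S$ and between the two parts, concluding $G\cong K_{n-\alpha}\vee \overline{K_\alpha}$, and then evaluate via Lemma~\ref{kst} in the two subcases $n_0=1$ (i.e.\ $\alpha=n-1$) and $n_0\ge 2$. In fact you spell out a detail the paper leaves implicit, namely that adding such an edge keeps the independence number equal to $\alpha$ because the chosen set is a maximum independent set.
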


\begin{proof} Suppose that $G$ is an $n$-vertex graph with independence number $\alpha$ that maximizes the link residual closeness.

Let $S$ be an independent set of $G$ with $|S|=\alpha$. By Lemma \ref{AddE}, $G[V(G)\setminus S]\cong K_{n-\alpha}$, so
$G\cong K_{n-\alpha}\vee \overline{K_\alpha}\cong K_{n-\alpha}\vee (\underbrace{K_1\cup \dots\cup K_1}_{\alpha \mbox{ times}})$.
By Lemma \ref{kst},
\[
R^L(G)=\begin{cases}
\frac{n^2-n-2}{4} & \mbox{if $\alpha=n-1$,}\\
\frac{n^2-n-1}{2}-\frac{\alpha^2-\alpha}{4} & \mbox{if $\alpha\le n-2$,}
\end{cases}
\]
as desired.
\end{proof}

The matching number of a graph $G$ is the cardinality of a maximum matching of $G$, denoted by $\beta(G)$. Berge \cite{Be} extended
Tutte's 1-Factor Theorem \cite{Tu} to  the Berge-Tutte Formula for
the  matching number of a graph, which states that
\[
\beta(G)=\frac{1}{2}\min_{S\subset V(G)} (|V(G)|+|S|-o(G-S)),
\]
where, for a graph $H$,
$o(H)$ denotes the number of odd components (those with an odd number of vertices) of $H$.

Let $G$ be a connected graph on $n\ge 3$ vertices with matching number one.
If $n=3$, then  $G\cong S_3, K_3$ with $R^L(S_3)=1<R^L(K_3)=\frac{5}{2}$.
If $n\ge 4$, then   $G\cong S_n$ with  $R^L(S_n)=\frac{(n-2)(n+1)}{4}$.

By \cite[Theorem 6.1]{ZLG}, $K_n$ maximizes the link residual closeness among all $n$-vertex connected graphs and  $R^L(K_n)=\frac{n^2-n-1}{2}$. Thus, if $G$ is a connected graph on $n\ge 2$ vertices with matching number $\lfloor\frac{n}{2}\rfloor$, then  $R^L(G)\le \frac{n^2-n-1}{2}$ with equality if and only if $G\cong K_n$.

\begin{theorem}\label{matching}
Let $G$ be an $n$-vertex graph with matching number $\beta$, where $2\le \beta\le \lfloor\frac{n}{2}\rfloor-1$.

If $2\le \beta< \frac{2n+3}{5}$, then
\[
R^L(G)\le
\frac{1}{4}n^2-\frac{1}{4}n-\frac{1}{4}\beta^2+\frac{1}{2}n\beta-\frac{1}{4}\beta-\frac{1}{2}
\]
with equality if and only if $G\cong K_\beta\vee \overline{K_{n-\beta}}$.

If $\frac{2n+3}{5}< \beta <\frac{n}{2}-1$, then
\[
R^L(G)\le \frac{1}{4}n^2+\frac{3}{4}n+\beta^2-\frac{7}{2}\beta+\frac{1}{2}
\]
with equality if and only if $G\cong K_2\vee \left((n-2\beta+1)K_1\cup K_{2\beta-3}\right)$.

If $\beta=\frac{2n+3}{5}$, then
\[
R^L(G)\le \frac{1}{4}n^2-\frac{1}{4}n-\frac{1}{4}\beta^2+\frac{1}{2}n\beta-\frac{1}{4}\beta-\frac{1}{2}
\]
with equality if and only if $G\cong K_\beta\vee \overline{K_{n-\beta}}, K_2\vee \left((n-2\beta+1)K_1\cup K_{2\beta-3}\right)$.

If $\beta=\frac{n}{2}-1$, then
\[
R^L(G)\le 2\beta^2+\beta-\frac{1}{2}
\]
with equality if and only if $G\cong K_{2\beta+1}\cup (n-2\beta-1)K_1$.
\end{theorem}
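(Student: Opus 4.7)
The plan is to reduce the extremal graph to the form $K_s \vee (K_{m_1} \cup \cdots \cup K_{m_t})$ via the Berge-Tutte formula and Lemma~\ref{AddE}, and then to optimize the resulting expression. First, by Berge-Tutte, there is a set $S \subseteq V(G)$ with $|S| = s$ such that $o(G-S) = n - 2\beta + s$. Applying Lemma~\ref{AddE}, we may perform the following edge additions without decreasing $R^L$: (i) complete $G[S]$; (ii) join every vertex of $S$ to every vertex of $V(G)\setminus S$; (iii) complete each component of $G-S$; and (iv) merge each even component of $G-S$ with an odd one (adding all connecting edges) to obtain a single larger odd complete component. Each of these operations leaves $o(G-S)$ unchanged, so the Berge-Tutte value at $S$ remains $(n+s-o(G-S))/2 = \beta$, forcing the matching number to stay equal to $\beta$ (since edge additions cannot decrease it). Thus we may assume $G \cong K_s \vee (K_{m_1} \cup \cdots \cup K_{m_t})$ with $m_1 \le \cdots \le m_t$ positive odd integers, $\sum m_i = n-s$, $t = n - 2\beta + s$, and $s < t$ (the last from $\beta < n/2$).

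For fixed $s$, Lemma~\ref{kst} expresses $R^L$ as $\tfrac{1}{4}\sum_i m_i^2$ plus terms depending only on $n$ and $s$. Maximizing $\sum m_i^2$ under $\sum m_i = n-s$ with $t$ positive odd summands yields $m_1 = \cdots = m_{t-1} = 1$ and $m_t = 2\beta - 2s + 1$, which additionally forces $s \le \beta$. Denote the resulting graph by $H(s)$; then $H(\beta) = K_\beta \vee \overline{K_{n-\beta}}$, $H(2) = K_2 \vee ((n-2\beta+1)K_1 \cup K_{2\beta-3})$, and $H(0) = K_{2\beta+1} \cup (n-2\beta-1)K_1$. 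It therefore suffices to compare $R^L(H(s))$ across $s \in \{0,1,\dots,\beta\}$.

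Computing $R^L(H(s))$ explicitly, for $s \ge 2$ it is a strictly convex quadratic in $s$, so its maximum on $\{2,3,\dots,\beta\}$ is attained at an endpoint. A short computation gives $R^L(H(2)) - R^L(H(1)) = n - 2\beta + 1 > 0$ in the admissible range, ruling out $s=1$. The decisive comparison between $H(2)$ and $H(\beta)$ admits the clean factorization
\[
R^L(H(2)) - R^L(H(\beta)) = \tfrac{5}{4}(\beta - 2)\!\left(\beta - \tfrac{2n+3}{5}\right),
\]
which immediately yields the first three cases of the theorem. The fourth case, $\beta = n/2 - 1$, is settled by a separate comparison showing that the disconnected graph $H(0)$, whose $R^L$ equals $2\beta^2 + \beta - \tfrac12$ by direct calculation, dominates every $H(s)$ with $s \ge 1$ in this extremal regime.

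The main obstacle is step (iv) in the structural reduction: one must verify that merging an even component of $G-S$ with an odd component by filling in all connecting edges (and subsequently completing the union) preserves the exact matching number $\beta$ rather than merely its Berge-Tutte upper bound. The key is to observe that such a merge converts one odd and one even component into a single odd component, so $o(G-S)$ is unchanged; combined with the fact that edge additions never decrease the matching number, this pins $\beta(G)$ at the target value. Once this invariance is established, the remainder reduces to algebraic manipulation of the formulae from Lemma~\ref{kst} and the factorization above.
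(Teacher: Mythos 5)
Your overall route is the same as the paper's (Berge--Tutte set $S$, Lemma~\ref{AddE} to reduce to $K_s\vee(K_{m_1}\cup\cdots\cup K_{m_t})$ with all $m_i$ odd, then reduce to one big clique plus singletons and optimize over $s$; your factorization $R^L(H(2))-R^L(H(\beta))=\tfrac54(\beta-2)\bigl(\beta-\tfrac{2n+3}{5}\bigr)$ checks out and is a cleaner way to do the endpoint comparison than the axis-of-symmetry argument). However, there is a genuine gap in your step reducing, for fixed $s$, to $m_1=\cdots=m_{t-1}=1$. Your justification is that Lemma~\ref{kst} writes $R^L$ as $\tfrac14\sum_i m_i^2$ plus terms depending only on $n$ and $s$, so one just maximizes $\sum_i m_i^2$. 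This is true for $s\ge 2$, but it is false for $s=1$: in that case the formula of Lemma~\ref{kst} also depends on $m_1$ (there is an extra $+\tfrac14 m_1$ and a different coefficient of $n$ according as $m_1=1$ or $m_1\ge 2$), and Lemma~\ref{kst} does not apply at all when $s=0$ (there one needs the direct computation $R^L=\sum_i C(K_{m_i})-\tfrac12$, which happens to rescue the sum-of-squares argument). For $s=1$ the sum-of-squares heuristic can actually go the wrong way: with $n=10$, $\beta=4$, the graph $K_1\vee(K_3\cup K_3\cup K_3)$ has $R^L=29.5$ while $K_1\vee(K_1\cup K_3\cup K_5)$ has $R^L=28.5$, even though the latter has the larger $\sum_i m_i^2$. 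So the local exchange that increases $\sum_i m_i^2$ can strictly decrease $R^L$, and your reduction to $H(1)$ for $s=1$ is unproved; consequently, ruling out $s=1$ by the single comparison $R^L(H(2))-R^L(H(1))=n-2\beta+1>0$ is not enough, because a priori some other $s=1$ configuration (e.g.\ all parts equal to $3$) could beat $H(1)$. The paper closes exactly this hole in Claim~2 by treating the subcase $s=1$, $n_1=\cdots=n_{t-1}=3$ separately, comparing it directly with $K_1\vee((t-1)K_1\cup K_{n-t})$ and verifying the nontrivial inequality $-2t^2+(n+1)t-\tfrac54 n+\tfrac12>0$ under $n\ge 3t+1$; you need this (or an equivalent direct argument) for your proof to go through.

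A secondary caution: your last sentence asserts that when $\beta=\tfrac n2-1$ the graph $H(0)=K_{2\beta+1}\cup(n-2\beta-1)K_1$ dominates every $H(s)$ with $s\ge1$. This comparison only holds when additionally $\beta\ge\tfrac{2n+3}{5}$ (i.e.\ $n\ge 16$); for smaller even $n$ one has $R^L(H(\beta))>R^L(H(0))$ (e.g.\ $n=10$, $\beta=4$: $37>35.5$), so the claimed domination is not established by your outline. (This reflects an overlap already present in the theorem's case ranges, but as written your argument does not prove the asserted inequality.)
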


\begin{proof}
Suppose that $G$ is an $n$-vertex graph with matching number $\beta$ that maximizes the link residual closeness.

By the  Tutte-Berge formula, there is a vertex set $S\subset V(G)$ such that $\beta=\frac{1}{2}(n+|S|-o(G-S))$.
Let $s=|S|$ and $t=o(G-S)$.
Then $n-2\beta  =t-s$.
Since $\beta \le \lfloor\frac{n}{2}\rfloor-1$, we have $t-s\ge 2$ and hence $t\ge 2$.
As $n-s\ge t=n-2\beta +s$, we  have $s\le \beta$.

\noindent
{\bf Claim 1.} All components of $G-S$ are odd.

Otherwise, there is an even component in $G-S$.
Let $H$ be a graph obtained from $G$ by adding all possible edges between vertices in an even component  and an odd component of $G-S$.
Then
$\beta(H)\ge\beta(G)=\beta$.
Moreover, $o(H-S)=t$ and $\beta(H)\le \frac{1}{2}(n+s-t)=\beta$.
So $\beta(H)=\beta$.
By Lemma \ref{AddE}, $R^L(H)>R^L(G)$, a contradiction.
This proves Claim 1.

Denote by $G_i$ with $i=1,\dots, t$ the components of $G-S$ with.
 $n_i=|V(G_i)|$.
By Claim 1 and Lemma \ref{AddE}, $n_i$ is odd and $G_i\cong K_{n_i}$ for $i=1,\dots,t$.
So $G\cong K_s\vee (K_{n_1}\cup \dots\cup K_{n_t})$, where $K_0\vee (K_{n_1}\cup \dots\cup K_{n_t})
=K_{n_1}\cup \dots\cup K_{n_t}$.
Assume that $n_1\le\dots\le n_t$.

\noindent
{\bf Claim 2.}  $n_1=\dots=n_{t-1}=1$ and $n_t=n-s-t+1$.

Otherwise, $n_{t-1}\ge 3$.
Let
$G'=K_1\vee ((t-1)K_1\cup K_{n-t})$ if $s=1$ and  $n_1=\dots=n_{t-1}=3$, and $G'=K_s\vee (K_{n_1}\cup\dots\cup K_{n_{t-2}}\cup K_{n_{t-1}-2}\cup K_{n_t+2})$ otherwise.
As $o(G'-S)=t$, $\beta(G')\le \frac{n+s-t}{2}=\beta$.
Note that $\beta(G')\ge \frac{\sum_{i=1}^tn_i-t}{2}+s=\beta$.
So $\beta(G')=\beta$.

Suppose first that $s=0$. Note that $n_i=1$ or $n_i\ge 3$, and  $n_{t-1}\ge 3$. Then it is easy to see
\[
R^L(G)=\sum_{i=1}^t C(K_{n_i})-\frac{1}{2},
\]
\[
R^L(G')=\sum_{i=1}^{t-2} C(K_{n_i})+C(K_{n_{t-1}-2})+C(K_{n_t+2})-\frac{1}{2},
\]
so
\begin{align*}
&\quad R^L(G')-R^L(G)\\
&=C(K_{n_{t-1}-2})+C(K_{n_t+2})-C(K_{n_{t-1}})-C(K_{n_t})\\
&=\frac{1}{2}\left((n_{t}+2)^2+(n_{t-1}-2)^2\right)-\frac{1}{2}\left(n_{t}^2+n_{t-1}^2\right)\\
&=2(n_{t}-n_{t-1}+2)\\
&>0,
\end{align*}
a contradiction.
Suppose next that  $s=1$ and $n_1=\dots=n_{t-1}=3$.  Then  $n_t=n-3t+2$, so by Lemma \ref{kst}, we have
\[
R^L(G)=\frac{9}{4}(t-1)+\frac{1}{4}(n-3t+2)^2+\frac{1}{4}n^2-\frac{1}{4}n+\frac{1}{4}
\]
and
\[
R^L(G')=\frac{1}{4}(t-2)+\frac{1}{4}(n-t)^2+\frac{1}{4}n^2-\frac{1}{2}n.
\]
As $n\ge 3t+1$, we have
\begin{align*}
&\quad R^L(G')-R^L(G)\\
&=\frac{1}{4}(t-2)+\frac{1}{4}(n-t)^2-\frac{1}{4}n
-\frac{9}{4}(t-1)-\frac{1}{4}(n-3t+2)^2-\frac{1}{4}\\
&=
-2t^2+(n+1)t-\frac{5}{4}n+\frac{1}{2}\\
&\ge
-2t^2+\left(t-\frac{5}{4}\right)(3t+1)+t+\frac{1}{2}\\
&=
t^2-\frac{7}{4}t-\frac{3}{4}\\
&>0,
\end{align*}
a contradiction. Now we are left with the following three cases:
(a) $s=1$ and $n_1=\dots =n_r=1$, $n_{r+1}\ge 3$ for some $r=1,\dots, t-2$, or
(b) $s=1$, $n_1\ge 3$ and $n_{t-1}\ge 5$, or
(c) $s\ge 2$.
For any case, we have by Lemma \ref{kst} that
\begin{align*}
R^L(G')-R^L(G)
&=\frac{1}{4}((n_t+2)^2+(n_{t-1}-2)^2)-\frac{1}{4}(n_t^2+n_{t-1}^2)\\
&=n_t-n_{t-1}+2\\
&>0,
\end{align*}
also a contradiction. This proves Claim 2.

Recall that $t=n-2\beta+s$. By Claim 2,
 $G\cong K_s\vee ((n-2\beta+s-1)K_1\cup K_{2\beta-2s+1})$.
For fixed $n$, let $f(s)=R^L(K_s\vee ((n-2\beta+s-1)K_1\cup K_{2\beta-2s+1}))$. We want to maximize $f(s)$.

Suppose that $s\ge 2$. By Lemma \ref{kst} again,
\begin{align*}
f(s)&=\frac{1}{4}(2\beta-2s+1)^2+\frac{1}{4}(n-2\beta+s-1)\\
&\quad +\frac{1}{4}n^2-\frac{1}{4}s^2+\frac{1}{2}ns-\frac{1}{2}n-\frac{1}{2}.
\end{align*}
As a quadratic function of $s\in [2,\beta]$, $f(s)$  has axis of symmetry $s=s^*:=\frac{4\beta}{3}-\frac{n}{3}+\frac{1}{2}$.
Note that $\beta-s^*>s^*-2$ if and only if $\beta<\frac{2n+3}{5}$.
Thus,
\[
f(\beta)>f(s) \mbox{ if $s\in [2,\beta)$ and $\beta<\frac{2n+3}{5}$},
\]
\[
f(2)=f(\beta)>f(s)  \mbox{ if  $s\in(2,\beta)$ and  $\beta=\frac{2n+3}{5}$},
\]
and
\[
f(2)>f(s)  \mbox{ if $s\in (2,\beta]$ and $\beta>\frac{2n+3}{5}$}.
\]
That is, $f(s)$ is maximized if and only if
\[
s=\begin{cases}
\beta&\mbox{ if }\beta<\frac{2n+3}{5},\\
2,\beta&\mbox{ if }\beta=\frac{2n+3}{5},\\
2&\mbox{ if }\beta>\frac{2n+3}{5},
\end{cases}
\]
and correspondingly, the maximum of $f(s)$ is equal to
$f(\beta)=\frac{1}{4}n^2-\frac{1}{4}n-\frac{1}{4}\beta^2+\frac{1}{2}n\beta-\frac{1}{4}\beta-\frac{1}{2}$ if $\beta\le \frac{2n+3}{5}$, and
$f(2)=\frac{1}{4}n^2+\frac{3}{4}n+\beta^2-\frac{7}{2}\beta+1$  if  $\beta\ge \frac{2n+3}{5}$.

Next, by Lemma \ref{kst}, we have
\[
f(1)=\frac{1}{4}(2\beta-1)^2+\frac{1}{4}(n-2\beta-1)+\frac{1}{4}n^2-\frac{1}{2}n.
\]
For $\beta\in [2,\frac{2n+3}{5}]$, we have
\[
f(\beta)-f(1)=\phi(\beta):= -\frac{5}{4}\beta^2+\frac{1}{2}n\beta+\frac{5}{4}\beta-\frac{1}{2}.
\]
As
\begin{align*}
\phi(\beta) & \ge \phi\left(\frac{2n+3}{5}\right)\\
&=-\frac{(2n+3)^2}{20}+\frac{(2n+3)n}{10}+\frac{2n+3}{4}-\frac{1}{2}\\
&=\frac{n-1}{5}>0,
\end{align*}
 we have
$f(\beta)>f(1)$.
On the other hand, we have $f(2)>f(1)$ as $f(2)-f(1)=n-2\beta+1>0$. Thus $s=1$ is impossible.

Now,  we have by direct calculation that
\[
f(0)=
2\beta^2+\beta-\frac{1}{2}.
\]
For $\beta\in [2,\frac{2n+3}{5}]$, we have
\[
f(\beta)-f(0)=\varphi(\beta):=\frac{1}{4}n^2-\frac{1}{4}n-\frac{9}{4}\beta^2+\frac{1}{2}n\beta-\frac{5}{4}\beta.
\]
and
$\varphi(\beta)\ge g\left(\frac{2n+3}{5}\right)=\frac{9n^2+63n-6}{100}>0$,
so $f(\beta)>f(0)$, implying that $s=0$ is impossible.
On the other hand, for $\beta\in [\frac{2n+3}{5},\lfloor \frac{n}{2}\rfloor-1]$,
\[
f(2)-f(0)=\psi(\beta):=\frac{1}{4}n^2+\frac{3}{4}n-\beta^2-\frac{9}{2}\beta+\frac{3}{2}.
\]
As
$\psi(\beta)=0$ has a positive root \[\beta_1=\frac{\sqrt{(n+\frac{3}{2})^2+24}}{2}-\frac{9}{4}>\frac{n}{2}-\frac{3}{2},\]
we have
$f(2)>f(0)$
if $n$ is odd, or if $\beta<\frac{n}{2}-1$ and $n$ is even. In such cases, $s=0$ is also impossible. If
$\beta=\frac{n}{2}-1$, then  $f(2)-f(0)=-2n+8<0$, so $f(2)<f(0)$, implying that $s=2$ is impossible.

Therefore, we conclude that $f(s)$ is maximized if and only if
\[
s=\begin{cases}
\beta&\mbox{ if }2\le \beta<\frac{2n+3}{5},\\
2&\mbox{ if }\frac{2n+3}{5}<\beta<\frac{n}{2}-1,\\
2,\beta&\mbox{ if }\beta=\frac{2n+3}{5},\\
0 &\mbox{ if }\beta=\frac{n}{2}-1.
\end{cases}
\]
Correspondingly,
\[
G\cong K_\beta\vee \overline{K_{n-\beta}}
\]
with
$R^L(G)=f(\beta)$
if $2\le \beta<\frac{2n+3}{5}$,
\[
G\cong K_2\vee \left((n-2\beta+1)K_1\cup K_{2\beta-3}\right)
\]
with $R^L(G)=f(2)$
if $\frac{2n+3}{5}<\beta<\frac{n}{2}-1$,
\[
G\cong K_\beta\vee \overline{K_{n-\beta}},K_2\vee \left((n-2\beta+1)K_1\cup K_{2\beta-3}\right)
\]
with $R^L(G)=f(\beta)=f(0)$
if $\beta=\frac{2n+3}{5}$,
and
\[
G\cong K_{2\beta+1}\cup (n-2\beta-1)K_1
\]
with $R^L(G)=f(0)$
if $\beta=\frac{n}{2}-1$.
This completes the proof.
\end{proof}

Suppose that $G$ is connected in  the above proof. As $t-s=n-2\beta\ge 2$, we have $s\ne 0$, so one immediately has the following corollary.

\begin{corollary}\label{ma}
Let $G$ be an $n$-vertex connected graph with matching number $\beta$, where $2\le \beta\le \lfloor\frac{n}{2}\rfloor-1$.

If $2\le \beta< \frac{2n+3}{5}$, then
\[
R^L(G)\le
\frac{1}{4}n^2-\frac{1}{4}n-\frac{1}{4}\beta^2+\frac{1}{2}n\beta-\frac{1}{4}\beta-\frac{1}{2}
\]
with equality if and only if $G\cong K_\beta\vee \overline{K_{n-\beta}}$.

If $\frac{2n+3}{5}< \beta \le\frac{n}{2}-1$, then
\[
R^L(G)\le \frac{1}{4}n^2+\frac{3}{4}n+\beta^2-\frac{7}{2}\beta+\frac{1}{2}
\]
with equality if and only if $G\cong K_2\vee \left((n-2\beta+1)K_1\cup K_{2\beta-3}\right)$.

If $\beta=\frac{2n+3}{5}$, then
\[
R^L(G)\le \frac{1}{4}n^2-\frac{1}{4}n-\frac{1}{4}\beta^2+\frac{1}{2}n\beta-\frac{1}{4}\beta-\frac{1}{2}
\]
with equality if and only if $G\cong K_\beta\vee \overline{K_{n-\beta}}, K_2\vee \left((n-2\beta+1)K_1\cup K_{2\beta-3}\right)$.
\end{corollary}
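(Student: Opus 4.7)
The plan is to piggyback on the proof of Theorem \ref{matching} and observe that the connectedness hypothesis only rules out certain values of the parameter $s$ that was introduced there, so that the three remaining cases fall out directly from the already-completed optimization. Concretely, following the proof of Theorem \ref{matching} verbatim (existence of a Tutte--Berge set $S$, structure of the odd components under Lemma \ref{AddE}, and Claims 1 and 2), any extremal $G$ with matching number $\beta$ has the form $G \cong K_s \vee ((n-2\beta+s-1)K_1 \cup K_{2\beta-2s+1})$ for some $0 \le s \le \beta$, and $R^L(G) = f(s)$ in the notation introduced there.

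Next I would observe that connectedness forces $s \ge 1$: if $s = 0$ then $G = (n-2\beta-1)K_1 \cup K_{2\beta+1}$, and since $t - s = n - 2\beta \ge 2$, the graph $G$ has at least two components, so it is disconnected. Then I would rule out $s = 1$ by recalling the two strict inequalities $f(\beta) > f(1)$ and $f(2) > f(1)$ established in the proof of Theorem \ref{matching}; together they show that $s = 1$ is never optimal in the range $s \in [1,\beta]$, regardless of where $\beta$ sits relative to $\tfrac{2n+3}{5}$. Hence for connected extremal $G$ we have $s \in \{2, 3, \dots, \beta\}$.

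Finally, I would invoke the quadratic analysis of $f(s)$ for $s \ge 2$ already carried out in the proof of Theorem \ref{matching}: the axis of symmetry $s^\ast = \tfrac{4\beta}{3} - \tfrac{n}{3} + \tfrac{1}{2}$ of this quadratic yields the trichotomy $s = \beta$ when $\beta < \tfrac{2n+3}{5}$, $s = 2$ when $\beta > \tfrac{2n+3}{5}$, and both $s=2$ and $s=\beta$ when $\beta = \tfrac{2n+3}{5}$. Reading off the corresponding extremal graphs and values of $f$ produces exactly the three displayed bounds. Note that the case $\beta = \tfrac{n}{2}-1$ in Theorem \ref{matching} was special precisely because $f(0) > f(2)$ there and $s=0$ was admissible; under the connectedness restriction this case collapses into the middle range and requires no separate treatment.

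The proof is essentially a bookkeeping exercise on top of Theorem \ref{matching}; the only substantive step is the observation that $s = 0$ is the unique disconnected configuration and that $s = 1$ has already been shown suboptimal, so there is no real obstacle beyond being careful about the boundary value $\beta = \tfrac{2n+3}{5}$ (where ties must be preserved) and about including $\beta = \tfrac{n}{2}-1$ in the second range rather than as a separate case.
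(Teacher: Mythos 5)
Your proposal is correct and is essentially the paper's own argument: the authors derive Corollary \ref{ma} by the same observation that connectedness forces $s\ne 0$ (since $t-s=n-2\beta\ge 2$ would otherwise give at least two components), after which the optimization over $s\ge 1$ already carried out in the proof of Theorem \ref{matching} — including the elimination of $s=1$ and the quadratic trichotomy for $s\ge 2$ — yields the stated bounds, with $\beta=\frac{n}{2}-1$ absorbed into the middle range exactly as you note.
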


\section{Chromatic number}

The chromatic number of a graph $G$, written $\chi (G)$, is the least $k$ such
that the vertex set can be expressed as the union of $k$ independent sets.

The chromatic number of a graph $G$ is defined to be the number of colors necessary to color the vertices of $G$  such that no two adjacent vertices have the same color.

\begin{theorem}
For $n\ge 3$ and $2\le k\le n$,
let $G$ be an $n$-vertex  graph with chromatic number $k$.
Suppose that $n=kq+r$ with $0\le r<k$.
Then
\[
R^L(G)\le\begin{cases}
\frac{n^2-n-3}{4}+\frac{1}{2}\left\lfloor\frac{1}{4}n^2\right\rfloor & \mbox{if }k=2\\[2mm]
\frac{n^2-1}{2}-\frac{n+r(q+1)^2+(k-r)q^2}{4} & \mbox{if } k\ge 3
\end{cases}
\]
with equality if and only if $G$ is a balanced complete $k$-partite graph.
\end{theorem}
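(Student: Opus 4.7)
The plan is to use Lemma \ref{AddE} to reduce to complete $k$-partite graphs, and then to optimize over their partite sizes. Suppose $G$ has chromatic number $k$, and fix a proper $k$-coloring of $G$ with color classes $V_1,\dots,V_k$; each $V_i$ is nonempty, for otherwise $\chi(G)\le k-1$. Then $G$ is a spanning subgraph of $H:=K_{|V_1|,\dots,|V_k|}$, and $\chi(H)=k$, so repeated application of Lemma \ref{AddE} gives $R^L(G)\le R^L(H)$ with equality if and only if $G=H$. Hence we may assume $G=K_{n_1,\dots,n_k}$ with $\sum_{i=1}^k n_i=n$ and each $n_i\ge 1$, and the problem reduces to choosing $(n_1,\dots,n_k)$ to maximize $R^L(K_{n_1,\dots,n_k})$.

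For $k=2$ the maximum is supplied directly by Theorem \ref{adj}, since the extremal graph $K_{\lfloor n/2\rfloor,\lceil n/2\rceil}$ has chromatic number $2$.

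For $k\ge 3$, the main observation is that every edge of $K_{n_1,\dots,n_k}$ behaves symmetrically under deletion: if $uv$ is an edge with $u\in V_i$ and $v\in V_j$, then any vertex in a color class distinct from $V_i$ and $V_j$---which exists precisely because $k\ge 3$---provides a length-$2$ path from $u$ to $v$ in $K_{n_1,\dots,n_k}-uv$, so $d(u,v)$ changes from $1$ to $2$ while all other pairwise distances remain unchanged. Therefore $R^L(K_{n_1,\dots,n_k})=C(K_{n_1,\dots,n_k})-\tfrac{1}{2}$, and a direct pair-count gives
\[
C(K_{n_1,\dots,n_k})=\tfrac{1}{2}\sum_{i=1}^{k}{n_i\choose 2}+\sum_{1\le i<j\le k}n_in_j=\tfrac{n^2}{2}-\tfrac{n}{4}-\tfrac{1}{4}\sum_{i=1}^{k}n_i^2,
\]
so that $R^L(K_{n_1,\dots,n_k})=\tfrac{n^2-1}{2}-\tfrac{n+\sum_i n_i^2}{4}$.

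Maximizing $R^L$ is thus equivalent to minimizing $\sum_{i=1}^k n_i^2$ subject to $\sum_i n_i=n$ and $n_i\ge 1$. By strict convexity of $x\mapsto x^2$, the unique minimizer is the balanced composition with $r$ parts equal to $q+1$ and $k-r$ parts equal to $q$, giving $\sum_i n_i^2=r(q+1)^2+(k-r)q^2$; substitution produces the stated upper bound and the equality case. The only step requiring genuine care is the uniform distance computation in the $k\ge 3$ case (and verifying that the balanced complete $k$-partite graph indeed has chromatic number exactly $k$, which is immediate since each partite set is nonempty); the rest is routine bookkeeping.
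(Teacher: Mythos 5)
Your proposal is correct and follows essentially the same route as the paper: reduce to a complete $k$-partite graph via Lemma \ref{AddE}, invoke Theorem \ref{adj} for $k=2$, observe that for $k\ge 3$ deleting any edge only changes one distance from $1$ to $2$ so that $R^L=C-\tfrac12$, and then optimize $\sum_i n_i^2$ over the partite sizes. The only cosmetic difference is that you minimize $\sum_i n_i^2$ by strict convexity while the paper uses an explicit exchange argument (moving a vertex when $n_1-n_k\ge 2$); both yield the same balanced extremal graphs.
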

\begin{proof}
Suppose that $G$ is an $n$-vertex graph with chromatic number $k$ that maximizes the link residual closeness.

Denote by $V_1,\dots, V_k$ the coloring classes of $G$.
Let $n_i=|V_i|$ for $i=1,\dots,k$.
Assume that $n_i\ge n_{i+1}$ for $i=1,\dots,k-1$.
By Lemma \ref{AddE}, $G\cong K_{n_1,\dots,n_k}$.

If $k=2$, then the result follows from Theorem \ref{adj}.

Suppose next that $k\ge 3$.
Recall that $\sum_{i=1}^kn_i=n$.
For any $wz\in E(G)$, $d_{G}(w,z)=1$, $d_{G-wz}(w,z)=2$, and if $\{u,v\}\ne \{w,z\}$, then $d_{G-wz}(u,v)=d_G(u,v)$. So
\begin{align*}
 R^L(G)
&=C(G)-\frac{1}{2}\\
&=\sum_{i=1}^k\frac{1}{4}n_i(n_i-1)+\sum_{1\le i<j\le k}n_in_j-\frac{1}{2}\\
&=\frac{1}{2}n^2-\frac{1}{4}n-\frac{1}{4}\sum_{i=1}^{k}n_i^2-\frac{1}{2}.
\end{align*}

Suppose  that $n_1-n_k\ge 2$.
Let $m_1=n_1-1$, $m_k=n_k+1$ and $m_i=n_i$ for $i=2, \dots, k-1$. Let  $G'=K_{m_1,\dots,m_k}$.
Then
\begin{align*}
R^L(G')-R^L(G)&=-\frac{1}{4}(m_1^2+m_k^2)+\frac{1}{4}(n_1^2+n_k^2)\\
&=\frac{1}{2}(n_1-n_k-1)\\
&>0,
\end{align*}
a contradiction. So $n_1-n_k=0,1$.
As $n=kq+r$ with $0\le r<k$, we have $n_i=q+1$ for $i$ with $1\le i\le r$ and $n_i=q$ for $i=r+1,\dots,k$.
Then
\[
R^L(G)=\frac{1}{2}n^2-\frac{1}{4}n-\frac{1}{4}r(q+1)^2-\frac{1}{4}(k-r)q^2-\frac{1}{2},
\]
as desired.
\end{proof}

\section{Number of cut edges}


A cut edge (or bridge) is a single edge whose removal disconnects a graph.
For  a connected graph $G$ on $n$ vertices, if $G$ is a tree, then
every edge is a cut edge. Otherwise, there are at most $n-3$ cut edges.

A path $u_0u_1\dots u_s$ in a graph $G$  is
an internal path of $G$ with length $s$ if $\delta_G(u_0), \delta_G(u_s)\ge 3$ and if $s\ge 2$, then $\delta_G(u_i)=2$ for each $i=1,\dots, s-1$.

\begin{lemma}\label{new1}
Let $G$ be a connected graph with an internal path $u_0\dots u_t$, where $t\ge 1$.
Let $N_0=N_G(u_t)\setminus\{u_{t-1}\}$ and
\[
G'=G-\{u_tw:w\in N_0 \}+\{u_0w:w\in N_0\}.
\]
Then $R^L(G')>R^L(G)$.
\end{lemma}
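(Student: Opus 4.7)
I plan to pair each edge of $G'$ with an edge of $G$ and prove the comparison $C(G'-e') > C(G-\pi(e'))$ for every $e' \in E(G')$. Since
\[
E(G') = \bigl(E(G) \setminus \{u_t w : w \in N_0\}\bigr) \cup \{u_0 w : w \in N_0\},
\]
the natural choice is the bijection $\pi : E(G') \to E(G)$ with $\pi(u_0 w) = u_t w$ for $w \in N_0$ and $\pi(e) = e$ for every $e \in E(G) \cap E(G')$. Once the per-edge inequality is in hand, minimizing both sides yields
\[
R^L(G') = \min_{e' \in E(G')} C(G' - e') > \min_{e \in E(G)} C(G - e) = R^L(G),
\]
which is the desired conclusion. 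The observation driving the inequality is that after the operation, the segment $u_0 u_1 \cdots u_t$ becomes a pendant path at $u_0$ in $G'$ (since $u_0$ gains the $|N_0| \ge 2$ new neighbors while $u_t$ is reduced to having only $u_{t-1}$), so the behavior along this tail is governed by Lemma~\ref{pendant}.

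The per-edge inequality reduces to a pairwise distance comparison between $G' - e'$ and $G - \pi(e')$ on the common vertex set $V(G)$. I would partition $V(G) = P \sqcup N_0 \sqcup V_0$, where $P = \{u_0, \ldots, u_t\}$ and $V_0 = V(G) \setminus (P \cup N_0)$, and split the sum $\sum_{x \ne y} 2^{-d(x,y)}$ according to which blocks $x$ and $y$ lie in. The only structural difference between the two graphs (apart from the common deleted edge) is the swap $\{u_t w\}_{w \in N_0} \leftrightarrow \{u_0 w\}_{w \in N_0}$, with the path $u_0 u_1 \cdots u_t$ preserved in both. Pairs $(x, w)$ with $w \in N_0$ and $x$ near the $u_0$-end of the path contribute strictly larger terms in $G' - e'$ because $w$ is now directly adjacent to $u_0$ and so sits within distance $2$ of every original neighbor of $u_0$; pairs $(u_i, w)$ with $i$ close to $t$ contribute smaller terms in $G'-e'$ (distances grow by at most $t$), but these losses are attenuated by a factor of $2^{-(t-1)}$ and so are dominated by the nearby gains.

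The main obstacle is the case of a path edge $e' = u_i u_{i+1}$ (especially $e' = u_0 u_1$), because deleting this edge in $G'$ severs the tail $\{u_{i+1}, \ldots, u_t\}$ from the rest, while deleting the same edge in $G$ may leave the graph connected; the naive pairing $\pi(u_i u_{i+1}) = u_i u_{i+1}$ then fails, since the infinite distances in $G'-e'$ collapse their $2^{-d}$ contributions to $0$. To fix this, I would re-pair such edges with an edge $u_t w \in E(G)$ for some $w \in N_0$: removing $u_t w$ similarly weakens the $u_t$-side of $G$, and Lemma~\ref{pendant} lets me reduce across $i$ to the single worst case (e.g. $e' = u_0 u_1$). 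Verifying that this re-pairing always yields a strict inequality, and that the block-wise distance bookkeeping in the other cases closes with strict inequality as well, is the technical crux of the argument.
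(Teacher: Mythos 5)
Your overall architecture --- compare $C(G'-e')$ with $C(G-\pi(e'))$ edge by edge, note that $u_0u_1\cdots u_t$ becomes a pendant path of $G'$ so that Lemma~\ref{pendant} disposes of the interior path edges, and pair $u_0w$ with $u_tw$ while pairing every common edge with itself --- is the same as the paper's. But the step you yourself call ``the technical crux'' is exactly the content of the lemma and is left unproved: the paper's proof consists precisely of that bookkeeping, carried out with the decomposition into $V_1=V(G_1)$ (the component of $G-u_0u_1$ containing $u_0$), the path interior, and $V_2=V(G_2)$ (the component of $G-u_{t-1}u_t$ containing $u_t$), together with explicit estimates such as $\sum_{w\in V_1}2^{-d_G(u_0,w)}\ge 1+2\cdot\frac{1}{2}$ (from $\delta_G(u_0)\ge 3$) against the lost tail $\sum_{i=1}^t 2^{-(t-i)}=2-2^{-(t-1)}$.

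More seriously, the one concrete device you commit to for the hard case $e'=u_0u_1$ --- re-pairing it with an edge $u_tw$, $w\in N_0$ --- fails. Let $G$ consist of two triangles $abu_0$ and $cdu_2$ joined by the internal path $u_0u_1u_2$ (so $t=2$, $N_0=\{c,d\}$). Then $C(G'-u_0u_1)=9$, while $C(G-u_2c)=C(G-u_2d)=11$, so $C(G'-u_0u_1)>C(G-u_tw)$ is false for every $w\in N_0$; the lemma nevertheless holds here because $C(G-u_0u_1)=8<9$, i.e.\ the identity pairing you discarded is the right one. Your reason for discarding it --- that deleting $u_0u_1$ from $G'$ severs the tail while $G-u_0u_1$ may remain better connected --- misjudges where the gain comes from: in $G'-u_0u_1$ only the $t$ path vertices $u_1,\dots,u_t$ are cut off, whereas the whole former $u_t$-side $V_2\setminus\{u_t\}$ is now attached directly at $u_0$ and hence at finite, small distance from all of $V_1$; the paper shows, for each $z\in V_2\setminus\{u_t\}$ separately, that this gain strictly exceeds the lost terms $\sum_{i=1}^t 2^{-d_G(z,u_i)}$, precisely because $u_0$ has at least two neighbours other than $u_1$. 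So as it stands the proposal does not close: the central inequalities are deferred, and the announced repair of the problematic case is incorrect, whereas the remaining cases (edges of $G_1$, edges on the $u_t$-side under the isomorphism $u_0w\leftrightarrow u_tw$) are exactly the computations the paper performs.
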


\begin{proof} Let $G_1$ be the component of $G-u_0u_1$ containing $u_0$ and $G_2$ the component of
$G-u_{t-1}u_t$ containing $u_t$.
Let $V_i=V(G_i)$ for $i=1,2$.
Let $G_2'=G'[V_2\setminus\{u_t\}\cup\{u_0\}]$. Then $G_2'\cong G_2$.
Assume that $R^L(G')=C(G'-f)$ with $f\in E(G')$. By Lemma \ref{pendant},
$f\ne u_iu_{i+1}$ for $1\le i\le t-1$.
If $f=u_0u_1$, then
\begin{align*}
&\quad \sum_{w\in V_1} 2^{-d_{G'}(z,w)}-\sum_{i=1}^t2^{-d_G(z,u_i)}\\
&=2^{-d_{G}(z,u_t)} \left(\sum_{w\in V_1}2^{-d_G(u_0,w)}-\sum_{i=1}^t2^{-(t-i)} \right)\\
&\ge 2^{-d_{G}(z,u_t)} \left(1+2\cdot\frac{1}{2}-(2-2^{-(t-1)}) \right)\\
&>0
\end{align*}
for any $z\in V_2\setminus\{u_t\}$,
so
\begin{align*}
&\quad R^L(G')-R^L(G)\\
&\ge C(G'-f)-C(G-f)\\
&=2\sum_{z\in V_2\setminus\{u_t\}}\left(\sum_{w\in V_1} 2^{-d_{G'}(z,w)}-\sum_{i=1}^t2^{-d_G(z,u_i)}\right)\\
&>0.
\end{align*}
If $f\in E(G_1)\cup E(G_2')$, then
\[
\sum_{i=0}^{t}\left(2^{-d_{G'-f}(z,u_i)}-2^{-d_{G-f}(z,u_i)}\right)=0
\]
for any $z\in V_2\setminus\{u_t\}$, so
\begin{align*}
&\quad R^L(G')-R^L(G)\\
&\ge C(G'-f)-C(G-f)\\
&=2\sum_{z\in V_2\setminus\{u_t\}}\sum_{w\in V_1\setminus\{u_0\}}\left(2^{-d_{G'-f}(z,w)}-2^{-d_{G-f}(z,w)} \right)\\
&>0,
\end{align*}
where the second inequality follows because
$d_{G'-f}(z,w)=d_{G-f}(z,w)-t<d_{G-f}(z,w)$ or
$d_{G'-f}(z,w)=d_{G-f}(z,w)=\infty$
for any $z\in V_2\setminus\{u_t\}$ and $w\in V_1$, and there is indeed such vertex pair for which the former inequality holds.
It follows that $R^L(G')>R^L(G)$.
\end{proof}

\begin{lemma}\label{new2}
Let $G$ be a connected graph with   a
pendant path $u_0u_1\dots u_t$ at $u_0$, where $t\ge 2$.
Let
\[
G'=G-\{u_{i-1}u_{i}:i=2,\dots,t\}+\{u_0u_i:i=2,\dots,t\}.
\]
Then $R^L(G')>R^L(G)$.
\end{lemma}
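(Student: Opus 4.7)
The plan is to show that for every edge $e^* \in E(G')$ there is a witness edge $f \in E(G)$ with $C(G - f) < C(G' - e^*)$; taking the minimum over $e^*$ then gives $R^L(G) < R^L(G')$. Set $V_0 = V(G) \setminus \{u_1, \dots, u_t\}$. Because the path is pendant at $u_0$, distances within $V_0$ coincide with those in $G[V_0]$ in both $G$ and $G'$, and $\delta_G(u_0) \ge 3$ guarantees $u_0$ has at least two neighbors inside $V_0$. The edges of $G'$ split into the ``star'' edges $\{u_0 u_j : 1 \le j \le t\}$ and the ``body'' edges $E(G[V_0])$; I handle these two types separately.

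For $e^* = u_0 u_j$, the pendant symmetry at $u_0$ in $G'$ makes $C(G' - u_0 u_j)$ independent of $j$, so I choose $f = u_0 u_1$ in $G$. The graph $G - u_0 u_1$ splits into $G[V_0]$ and an isolated path $P_t$ on $u_1, \dots, u_t$ with $C(P_t) = 2t - 4 + 2^{2-t}$, while $G' - u_0 u_j$ consists of $G[V_0]$ equipped with $t-1$ pendants at $u_0$ together with one isolated vertex. Writing $A = \sum_{v \in V_0 \setminus \{u_0\}} 2^{-d_G(v, u_0)}$, which satisfies $A \ge 1$ since $u_0$ has at least two neighbors in $V_0$, a short summation yields
\[
C(G' - u_0 u_j) - C(G - u_0 u_1) = (t-1) A + h(t), \qquad h(t) = \tfrac{(t-1)(t-2)}{4} - (t-3) - 2^{2-t},
\]
and one verifies $h(t) \ge 0$ for every $t \ge 2$ (with equality only at $t = 2, 3$), so the difference is strictly positive.

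For $e^* \in E(G[V_0])$, I take $f = e^*$. The graphs $G - f$ and $G' - f$ share the body $G[V_0] - f$ and the edge $u_0 u_1$, and differ only in whether $u_2, \dots, u_t$ form a path extending from $u_1$ or pendants at $u_0$. A direct pair-by-pair tally yields
\[
C(G' - f) - C(G - f) = p(t) + \bigl[(t-2) + 2^{1-t}\bigr] \sum_{v} 2^{-d_{G[V_0] - f}(v, u_0)}, \qquad p(t) = \tfrac{(t-1)(t-4)}{4} + 1 - 2^{1-t},
\]
where the sum runs over $v \in V_0 \setminus \{u_0\}$ connected to $u_0$ in $G[V_0] - f$. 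The quantity $p(t)$ is nonnegative for every $t \ge 2$ (the gain from pairs $(u_0, u_j)$ with $j \ge 2$ becoming adjacent in $G'$ offsets the loss from consecutive path pairs becoming distance $2$), and the body sum is strictly positive because deleting a single edge cannot eliminate all of $u_0$'s at-least-two neighbors inside $V_0$. Hence $C(G' - f) > C(G - f) \ge R^L(G)$, and the proof is complete. The main technical obstacle is the bookkeeping in this second case for small $t$: at $t = 2$ the path contribution $p(t)$ vanishes, so the strict inequality rests entirely on the body sum, and this is precisely where the hypothesis $\delta_G(u_0) \ge 3$ is essential.
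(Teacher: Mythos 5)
Your proof is correct and takes essentially the same route as the paper: the same split of $E(G')$ into the star edges $u_0u_j$ and the remaining body edges, with the same witness edges in $G$ ($u_0u_1$ for the former, the edge itself for the latter), the only difference being that you evaluate the closeness differences in closed form where the paper settles for cruder lower bounds (a single witness neighbor of $u_0$ plus $C(S_{t+1})\ge C(P_{t+1})$). Your quantities $h(t)$, $p(t)$ and the coefficient $(t-2)+2^{1-t}$ all check out, and your use of $\delta_G(u_0)\ge 3$ to keep the residual sums strictly positive when $h$ or $p$ vanishes is exactly the role it plays in the paper's argument.
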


\begin{proof}
Let $V_1=V(G)\setminus\{u_1,\dots,u_t\}$.
Assume that $R^L(G')=C(G'-e)$ with $e\in E(G')$.

Suppose first that $e\in E(G)\setminus\{u_0u_1\}$.
As we pass from $G-e$ to $G'-e$, the distance between any pair of  vertices in $V_1\cup \{u_1\}$ remains unchanged. For any $w\in V_1$ and $i=2,\dots,t$,
$d_{G'-e}(w,u_i)=d_{G-e}(w,u_i)-i+1$ if $d_{G-e}(w,u_0)$ is finite, and there is indeed such a vertex $w=w_0$, a neighbor of $u_0$ that is not incident to $e$.
Moreover, $G[\{u_0,\dots,u_t\}]\cong P_{t+1}$ and $G'[\{u_0,\dots,u_t\}]\cong S_{t+1}$.
Note that $C(S_{t+1})\ge C(P_{t+1})$.
So
\begin{align*}
&\quad R^L(G')-R^L(G)\\
&\ge C(G'-e)-C(G-e)\\
&=2\sum_{i=2}^t\sum_{w\in V_1}\left(2^{-d_{G'-e}(u_i,w)}-2^{-d_{G-e}(u_i,w)} \right)\\
&\quad +C(S_{t+1})-C(P_{t+1})\\
&\ge 2\sum_{i=2}^t\left(2^{-d_{G'-e}(u_i,w_0)}-2^{-d_{G-e}(u_i,w_0)} \right)\\
&=2\sum_{i=2}^t\left(2^{-2}-2^{-(i+1)} \right)\\
&>0.
\end{align*}
If $e=u_0u_i$ for some $i=1,\dots,t$, say $i=1$, then
\begin{align*}
&\quad R^L(G')-R^L(G)\\
&\ge C(G'-e)-C(G-e)\\
&=2\sum_{i=2}^t\sum_{w\in V_1\setminus\{u_0\}} 2^{-1-d_{G}(u_0,w)}+C(S_{t})-C(P_{t})\\
&>0.
\end{align*}
Thus, in either case, we have  $R^L(G')>R^L(G)$.
\end{proof}

If $G$ is a tree on $n\ge 2$ vertices, then $R^L(G)\le \frac{(n-2)(n+1)}{4}$ with equality if and only if $G\cong S_n$, which is known in \cite{ZLG} and follows also from Lemma \ref{new2}.

For $1\le k\le n-3$, by $C_{n,k}$, we denote the $n$-vertex graph obtained from $K_{n-k}$ by attaching $k$ pendant edges at one common vertex. It is clear that $C_{n,1}\cong K_1\vee (K_1\cup K_{n-2})$. For $n\ge 5$, let $C_{n,2}'$ be be graph obtained from $C_{n-1, 1}$ by attaching a pendant edge at a vertex of degree $n-3$.

\begin{theorem} \label{cutedges}
Let $G$ be an $n$-vertex  connected graph with $k$ cut
edges, where $1\le k\le n-3$. Then
\[
R^L(G)\le\frac{n^2-nk}{2}-n+\frac{k^2+3k}{4}
\]
with equality if and only if $G\cong C_{n,k}$ for $k\ne 2$ and  $G\cong C_{n,k}, C_{n,k}'$ for $k=2$.
\end{theorem}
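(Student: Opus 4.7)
The plan is to take an extremal graph $G^*$ of order $n$ with exactly $k$ cut edges that maximizes $R^L$, apply Lemmas~\ref{AddE}, \ref{new1}, and \ref{new2} in sequence to reduce its structure to that of $K_{n-k}$ with $k$ pendant edges distributed among its vertices, and then select the optimal distribution by a short computation.

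For the structural reduction, I first observe that every 2-edge-connected block of $G^*$ is complete, since adding a missing edge inside such a block preserves the bridge set of $G^*$ but strictly increases $R^L$ by Lemma~\ref{AddE}. Next I claim $G^*$ has exactly one non-trivial (non-bridge) block. If two non-trivial complete blocks $K_{n_1}$ and $K_{n_2}$ shared a cut vertex $v$, then joining a non-$v$ vertex of one to a non-$v$ vertex of the other would merge them into a single block while keeping every other bridge intact, increasing $R^L$ by Lemma~\ref{AddE} and contradicting maximality. If two non-trivial blocks were instead connected by a bridge $uv$, I would apply Lemma~\ref{new1} to the internal path $uv$: this moves all of $v$'s remaining neighbors onto $u$, turning $v$ into a pendant of $u$ and replacing the second block by an isomorphic complete block sharing the cut vertex $u$ with the first, to which the previous case applies. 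Longer internal paths between two non-trivial blocks are handled in the same way. After this phase, $G^*$ consists of a single complete block $K_m$ together with a tree of bridges attached to it; iterated applications of Lemmas~\ref{new1} and \ref{new2} then collapse this bridge-tree into a set of pendant edges at vertices of $K_m$. The upshot is that $G^* \cong K_{n-k}$ with exactly $k$ pendant edges attached to its vertices, with $a_i \ge 0$ pendants at the $i$-th vertex of $K_{n-k}$ and $\sum_i a_i = k$.

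For any such graph $G$, a short distance count yields
\[
C(G) = \frac{(n-k)(n-k-1)}{2} + k + \frac{k(n-k-1)}{2} + \frac{\sum_i a_i^2 + k^2 - 2k}{8}.
\]
Deleting an edge inside $K_{n-k}$ decreases $C$ by exactly $\tfrac12$, whereas deleting the pendant edge at the $j$-th vertex of $K_{n-k}$ isolates its leaf endpoint and decreases $C$ by $1 + \frac{n-k+a_j-2}{2} + \frac{k-a_j}{4}$, which is always larger and is maximized in $j$ by any $j$ attaining $a^* := \max_i a_i$. Hence
\[
R^L(G) = C(G) - 1 - \frac{n-k+a^*-2}{2} - \frac{k-a^*}{4},
\]
and this depends on the distribution $(a_i)$ only through $\tfrac18 \sum_i a_i^2 - \tfrac14 a^*$. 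The elementary inequality $\sum_i a_i^2 \le a^* \sum_i a_i = k a^*$ (with equality iff every nonzero $a_i$ equals $a^*$) gives $\sum_i a_i^2 - 2 a^* \le (k-2) a^*$: for $k \ne 2$ this is uniquely maximized at $a^* = k$ (all pendants at a single vertex, i.e.\ $G \cong C_{n,k}$), while for $k = 2$ both $a^* = 2$ (giving $C_{n,2}$) and $a^* = 1$ (giving $C_{n,2}'$) are tied. Substituting $a^* = k$ and $\sum_i a_i^2 = k^2$ into the formula for $R^L$ yields $\frac{n^2-nk}{2}-n+\frac{k^2+3k}{4}$, matching the claimed bound.

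The main obstacle is the structural reduction: each operation must be arranged so as to preserve the exact count of $k$ cut edges while strictly increasing $R^L$, since a careless application of Lemma~\ref{AddE} can destroy a bridge. Handling general block-trees where non-trivial blocks are separated by long chains of bridges, possibly with degree-$\ge 3$ branching in the bridge-tree, requires iterating Lemmas~\ref{new1} and \ref{new2} in the right order and tracking which edges remain bridges at each step. Once the reduction is in place, the final optimization is a short inequality argument.
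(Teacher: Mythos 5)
Your structural reduction (all blocks complete, a single nontrivial block, all bridges collapsed to pendant edges via Lemmas \ref{AddE}, \ref{new1} and \ref{new2}, with the count of $k$ cut edges preserved at each step) is exactly the route the paper takes, and your formula for $C(G)$ on the reduced family is correct. The genuine error is in the step identifying which edge realizes $R^L$. Deleting an edge $v_iv_j$ of the clique $K_{n-k}$ does \emph{not} decrease $C$ by exactly $\tfrac12$: the pendants at $v_i$ also move farther from $v_j$ and from the pendants at $v_j$ (and symmetrically), so the correct loss is $\tfrac12+\tfrac14(a_i+a_j)+\tfrac18a_ia_j$. Consequently your claim that the pendant-edge loss is ``always larger'' is false, and so is the asserted identity $R^L(G)=C(G)-1-\frac{n-k+a^*-2}{2}-\frac{k-a^*}{4}$. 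Concretely, take $n=13$, $k=10$, core $K_3$ with $(a_1,a_2,a_3)=(5,5,0)$: the best pendant deletion loses $\frac n2-\frac k4+\frac{a^*}{4}=5.25$, while deleting $v_1v_2$ loses $\tfrac12+\tfrac{10}{4}+\tfrac{25}{8}=6.125$, so $R^L$ is attained by a clique edge, not a pendant edge. The paper avoids this by computing both $C(G-e_1)$ and $C(G-v_1v_2)$ and working throughout with $R^L(G)=\min\{C(G-e_1),C(G-v_1v_2)\}$, comparing \emph{both} quantities between $G$ and the candidate extremal graph when $k\ge 3$.

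The damage is repairable because your error goes in the benign direction: the pendant deletion is always one admissible candidate, so $R^L(G)\le C(G)-\bigl(1+\frac{n-k+a^*-2}{2}+\frac{k-a^*}{4}\bigr)$ for every graph in the reduced family, and your optimization over distributions (via $\sum_i a_i^2\le ka^*$ with its equality analysis) correctly shows this upper bound is maximized exactly by $a^*=k$ when $k\ne 2$ and by $a^*\in\{1,2\}$ when $k=2$. To finish you must then verify that the bound is actually attained at those graphs, i.e.\ that for $C_{n,k}$ (and $C_{n,2}'$) the pendant deletion really is the minimizer; this is the short comparison $\frac n2\ge\tfrac12+\tfrac k4$ (respectively $\frac n2-\tfrac14\ge\tfrac98$), valid since $n\ge k+3$. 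With the ``$=$'' replaced by ``$\le$'' and this exactness check added, the equality case also follows: $R^L(G)$ equal to the bound forces your upper-bound expression to equal its maximum, which by your strict analysis pins down the distribution, so the proof becomes a correct variant of the paper's argument rather than a different one.
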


\begin{proof}
If $k=1$, then the edge connectivity of $G$ is $1$, so
the result follows from Theorem \ref{edgeconn}.

Suppose that $k\ge 2$. Let $G$ be a  connected graph with $n$ vertices and $k$ cut
edges that maximizes the link residual closeness.

By Lemmas \ref{new1} and \ref{new2}, all cut edges of $G$ are pendant edges.
Let $V_0$ be the set of  vertices with degree one in $G$.
By Lemma \ref{AddE},  $G-V_0\cong K_{n-k}$.
Assume that $V(G-V_0)=\{v_1, \dots, v_{n-k}\}$ and that $d_G(v_i)=n-k-1+a_i$ for $i=1,\dots, n-k$,
with $a_1\ge \dots\ge a_{n-k}$. That is, $G$ is obtained from $K_{n-k}$  by attaching $a_i$ pendant edges at $v_i$ if $a_i>0$ for $i=1, \dots, n-k$.
By direct calculation,
\begin{align*}
C(G)&={n-k\choose 2}+\frac{1}{2}\sum_{i=1}^{n-k}{a_i\choose 2}\\
&\quad +\left(\frac{n-k-1}{2}+1\right)\sum_{i=1}^{n-k}a_i +\frac{1}{4}\sum_{1\le i<j\le n-k}a_ia_j\\
&=\frac{1}{2}(n-k)^2-\frac{1}{2}(n-k)+\frac{2(n-k)+1}{4}k\\
&\quad +\frac{1}{8}k^2 +\frac{1}{8}\sum_{i=1}^{n-k}a_i^2.
\end{align*}
If $a_i\ge 1$ with $i=1,\dots, n-k$, then for a pendant edge $e_i$ at  $v_i$,
\begin{align*}
&\quad C(G-e_i)\\
&=C(G)-\left(1+\frac{n-k-1}{2}+\frac{a_i-1}{2}+\frac{k-a_i}{4}\right)\\
&=C(G)-\frac{n}{2}+\frac{k}{4}-\frac{1}{4}a_i.
\end{align*}
For $1\le i<j\le n-k$ whether $a_ia_j=0$ or not,
\begin{align*}
&\quad C(G-v_iv_j)\\
&=C(G)-1+\frac{1}{2}-\frac{1}{2}(a_i+a_j)
 +\frac{1}{4}(a_i+a_j)\\
 &\quad  -\frac{1}{4}a_ia_j+\frac{1}{8}a_ia_j\\
&=C(G)-\frac{1}{2}-\frac{1}{4}(a_i+a_j)-\frac{1}{8}a_ia_j
\end{align*}
Recall that $a_1\ge\dots\ge a_{n-k}$,
so
$R^L(G)=\min \{C(G-e_1),C(G-v_1v_2) \}$, which is equal to the minimum of
$C(G)-\frac{n}{2}+\frac{k}{4}-\frac{1}{4}a_1$ and
$C(G)-\frac{1}{2}-\frac{1}{4}(a_1+a_2)-\frac{1}{8}a_1a_2$.

If $k=2$, then we have either $a_1=2$,  $a_2=\dots=a_{n-k}=0$ and so $G\cong C_{n,2}$ or $a_1=a_2=1$, $a_3=\dots =a_{n-k}=0$ and so $G\cong C_{n,2}'$.
In the former case,
\[
C(G-e_1)-C(G-v_1v_2)=1-\frac{1}{2}n<0,
\]
and in the latter case,
\[
C(G-e_1)-C(G-v_1v_2)=\frac{11}{8}-\frac{1}{2}n<0.
\]
Note that $C_{n,2}-e_1\cong C_{n,2}'-e_1$.
So $G\cong C_{n,2}, C_{n,2}'$, and   $R^L(G)=C(G-e_1)=\frac{1}{2}(n-k)^2+\frac{1}{2}$ in either case.

Suppose that $k\ge 3$. We claim that $a_2=0$.
Suppose to the contrary that $a_2\ge 1$.
Let $G'$ be the graph obtained from $G-V_0$ by attaching $k$ pendant edges at $v_1$.
Note that
\begin{align*}
&\quad C(G'-e_1)-C(G-e_1)\\
&=\frac{1}{4}(a_1-1)\sum_{i=2}^{n-k}a_i+\frac{1}{4}\sum_{2\le i<j\le n-k}a_ia_j\\
&\ge \frac{1}{4}(a_1-1)\sum_{i=2}^{n-k}a_i+\frac{1}{4}a_2\sum_{i=3}^{n-k}a_i\\
&> 0
\end{align*}
and
\begin{align*}
&\quad C(G'-v_1v_2)-C(G-v_1v_2)\\
& =\frac{1}{4}\sum_{1\le i<j\le n-k}a_ia_j-\frac{1}{4}\sum_{i=3}^{n-k}a_i+\frac{1}{8}a_1a_2 \\
& >\frac{1}{4}(a_1-1)\sum_{i=3}^{n-k}a_i\\
&\ge 0.
\end{align*}
So
\begin{align*}
R^L(G') & =\min\{C(G'-e_1),C(G'-v_1v_2) \}\\
&>\min\{C(G-e_1),C(G-v_1v_2) \}=R^L(G),
\end{align*}
a contradiction.
It follows that $a_2=0$,   so $G\cong C_{n,k}$ and
$
R^L(G)=C(G-e_1)=\frac{1}{2}(n-k)^2-\frac{1}{2}(n-k)+\frac{2(n-k)+1}{4}k+\frac{1}{4}k^2-\frac{n}{2}=\frac{n^2-nk}{2}-n+\frac{k^2+3k}{4}$,
as desired.
\end{proof}

From Theorem \ref{cutedges}, one immediately has

\begin{corollary}
Let $G$ be an $n$-vertex  connected graph with $k$ pendant
edges, where $1\le k\le n-3$. Then
\[
R^L(G)\le\frac{n^2-nk}{2}-n+\frac{k^2+3k}{4}
\]
with equality if and only if $G\cong C_{n,k}$ for $k\ne 2$ and  $G\cong C_{n,k}, C_{n,k}'$ for $k=2$.
\end{corollary}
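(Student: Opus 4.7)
The plan is to reduce this corollary to Theorem \ref{cutedges} via two simple observations: every pendant edge is a cut edge, and the right-hand side of the bound in Theorem \ref{cutedges} is strictly decreasing in the cut-edge parameter on the relevant range.

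Let $G$ be an $n$-vertex connected graph with exactly $k$ pendant edges, and let $k'$ denote the total number of cut edges of $G$. Since each pendant edge is itself a cut edge, $k' \ge k$. The argument now splits on whether $G$ is a tree. If $G$ is a tree, then $G \not\cong S_n$, because $S_n$ has $n-1$ pendant edges whereas $k \le n-3$. The tree estimate recalled just before Theorem \ref{cutedges} then gives $R^L(G) < R^L(S_n) = (n-2)(n+1)/4$. A short algebraic check rewrites the gap as
\[
\frac{n^2-nk}{2}-n+\frac{k^2+3k}{4}-\frac{(n-2)(n+1)}{4} = \frac{(n-1-k)(n-2-k)}{4},
\]
which is strictly positive for every $k\le n-3$, so every tree $G$ satisfies the claimed bound strictly and no tree can achieve equality.

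If $G$ is not a tree, then by the observation at the start of Section 6, $k' \le n-3$, and since $k' \ge k \ge 1$, Theorem \ref{cutedges} applies to $G$ with cut-edge parameter $k'$. Writing $\varphi(x) := (n^2-nx)/2 - n + (x^2+3x)/4$, this gives $R^L(G) \le \varphi(k')$, with equality if and only if $G \cong C_{n,k'}$, or, when $k'=2$, $G \cong C_{n,2}$ or $C_{n,2}'$. Differentiating, $\varphi'(x) = (x-n)/2 + 3/4 \le -3/4 < 0$ for $x \le n-3$, so $\varphi$ is strictly decreasing on $[1,n-3]$. Therefore $R^L(G) \le \varphi(k') \le \varphi(k)$, which is the claimed upper bound.

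For the equality case, strict monotonicity of $\varphi$ together with $R^L(G) = \varphi(k)$ forces $k' = k$, and Theorem \ref{cutedges} then identifies $G$ as $C_{n,k}$ when $k\ne 2$ and as $C_{n,k}$ or $C_{n,2}'$ when $k=2$. Conversely, each of these graphs has a $2$-edge-connected complete-graph core, so every cut edge is pendant, and they do indeed have exactly $k$ pendant edges. The main obstacle is merely the bookkeeping point that Theorem \ref{cutedges} excludes trees (whose $n-1$ cut edges fall outside its range); this is cleanly handled by invoking the star upper bound for trees, after which the proof is just monotonicity.
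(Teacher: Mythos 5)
Your proof is correct and takes essentially the same route as the paper, which obtains this corollary directly from Theorem \ref{cutedges} (every pendant edge is a cut edge, and the bound $\varphi(x)=\frac{n^2-nx}{2}-n+\frac{x^2+3x}{4}$ is strictly decreasing on $[1,n-3]$). Your extra bookkeeping—disposing of trees via the star bound $R^L(G)\le\frac{(n-2)(n+1)}{4}<\varphi(k)$ and checking that $C_{n,k}$ and $C_{n,2}'$ have exactly $k$ pendant edges—simply makes explicit the details the paper leaves to the reader when it says the result follows immediately.
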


\section{Number of cut vertices}

A vertex $v$ in a connected graph $G$ is a cut vertex if $G-v$ is disconnected.
For a connected graph $G$ on $n\ge 2$ vertices, it has at least two vertices that are not cut vertices, so $G$ possesses at most $n-2$ cut vertices. Moreover, if $G$ has $n-2$ cut vertices, then it is the $n$-vertex path.
If $G$ is a graph of order $n\ge 4$ with  one cut vertex, then, by Theorem \ref{conn},
\[
R^L(G)\le
\begin{cases}
\frac{n^2-3n+2}{2} & \mbox{if $n=4,n\ge 9$}\\
\frac{2n^2-7n+13}{4} & \mbox{if $5\le n\le 8$}
\end{cases}
\]
with equality if and only if $G\cong K_1\vee (K_2\cup K_{n-3})$ for $5\le n\le 8$, $G\cong K_1\vee (K_2\cup K_{n-3}),K_1\vee (K_1\cup K_{n-2})$ for $n=4,9$ and $G\cong K_1\vee (K_1\cup K_{n-2})$ for  $n\ge 10$.
In the remainder of this section, we consider the $n$-vertex connected graphs with $k$  cut vertices, where $2\le k\le n-3$.

For integers $r\ge 3$ and $a_1, \dots, a_r\ge 0$, we denote by $K^{a_1,\dots,a_r}$ the graph obtained from the complete graph
$K_r$ with vertex set $\{v_1,\dots,v_r\}$ by attaching a pendant path of length $a_i$ at $v_i$ if $a_i>0$ for $i=1,\dots,r$.

\begin{lemma}\label{vertexc}
For fixed integers $r\ge 3$ and $s\ge 2$, let
\[
\mathcal{G}(r,s)=\left\{K^{a_1,\dots,a_r}: a_1\ge \dots\ge a_r\ge 0, \sum_{i=1}^ra_i=s\right\}.
\]
Then
$G=K^{a_1,\dots,a_r}$ maximizes the link residual closeness in $\mathcal{G}(r,s)$ if and only if $a_1-a_r=0,1$.
\end{lemma}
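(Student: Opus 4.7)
The plan is to prove the lemma by a balancing argument: I will show that whenever $a_1 - a_r \ge 2$, the swap $(a_1, a_r) \mapsto (a_1 - 1, a_r + 1)$ (with the tuple re-sorted in non-increasing order) produces a graph $G' \in \mathcal{G}(r,s)$ with $R^L(G') > R^L(G)$. This forces $a_1 - a_r \le 1$ for any maximizer; conversely, any two tuples in $\mathcal{G}(r,s)$ satisfying $a_1 - a_r \le 1$ agree as multisets and therefore give isomorphic graphs, so the balanced distribution is the unique maximizer up to isomorphism.

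First I would cut down the set of edges at which $R^L(G)$ can be attained. Along each pendant path at $v_i$, Lemma~\ref{pendant} shows $C(G - e)$ is strictly smallest at the root edge $v_i u_{i,1}$, so
\[
R^L(G) = \min\Bigl\{\min_{i:\, a_i \ge 1} C(G - v_i u_{i,1}),\ \min_{1 \le i < j \le r} C(G - v_i v_j)\Bigr\}.
\]
Deleting a root edge detaches the pendant path into a free $P_{a_i}$, giving
\[
C(G - v_i u_{i,1}) = C(K^{a_1,\ldots,a_{i-1},0,a_{i+1},\ldots,a_r}) + C(P_{a_i}).
\]
Deleting a clique edge $v_i v_j$ forces every shortest path between the two pendant \emph{arms} at $v_i$ and $v_j$ to detour through some $v_p$ with $p\ne i,j$, so each such distance goes up by exactly $1$, and
\[
C(G - v_i v_j) = C(G) - \tfrac{1}{2}\bigl(2 - 2^{-a_i}\bigr)\bigl(2 - 2^{-a_j}\bigr).
\]
An explicit formula for $C(G)$ and the closed form $C(P_m) = 2m - 4 + 2^{2-m}$, derived from the distance description $d_G(v_i,v_j) = 1$, $d_G(v_i,u_{j,\ell}) = 1 + \ell$, $d_G(u_{i,k}, u_{j,\ell}) = k + 1 + \ell$ (for $i \ne j$), expresses every candidate as an explicit polynomial in $2^{-a_1}, \ldots, 2^{-a_r}$.

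The comparison then proceeds edge by edge under the natural bijection between candidate edges of $G$ and of $G' = K^{a_1 - 1, a_2, \ldots, a_{r-1}, a_r + 1}$. For a clique edge $v_i v_j$, the difference $C(G' - v_i v_j) - C(G - v_i v_j)$ reduces (via the formulas for $C(G)$ and for the subtracted product) to a combination of $2^{-a_1} - 2^{-(a_1 - 1)}$ and $2^{-a_r} - 2^{-(a_r + 1)}$ that is positive under $a_1 - a_r \ge 2$ by convexity of $2^{-x}$. For a root edge at $v_1$ or $v_r$, the increment splits as a change in $C(K^{\cdots})$ plus a change of the form $C(P_{a_1 - 1}) - C(P_{a_1})$ or $C(P_{a_r + 1}) - C(P_{a_r})$; using $C(P_m) = 2m-4+2^{2-m}$ a direct calculation shows that the change in $C(K^{\cdots})$ strictly dominates the change in $C(P)$, yielding a positive increment. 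For a root edge at any other $v_i$, the increment is entirely a change in $C(K^{\cdots,0,\cdots})$, handled recursively by the clique-edge argument applied to the smaller graph.

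Taking minima then gives $R^L(G') > R^L(G)$, contradicting the maximality of $G$. The main obstacle is precisely the root-edge increment at $v_1$: shortening the detached path decreases $C(P_{a_1})$ while the residual clique-plus-arms graph gains closeness from the lengthened $v_r$-arm, and one must verify explicitly that the latter gain strictly dominates whenever $a_1 - a_r \ge 2$. This verification is the single nontrivial computation driving the whole argument; once it is in hand, the balancing step and hence the whole lemma follows.
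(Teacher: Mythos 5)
Your overall plan (restrict to root edges and clique edges via Lemma~\ref{pendant}, compute $C(G-v_iv_j)=C(G)-\tfrac12(2-2^{-a_i})(2-2^{-a_j})$ and $C(G-v_iu_{i,1})=C(K^{a_1,\dots,0,\dots,a_r})+C(P_{a_i})$, then balance) starts the same way as the paper, but the step you rely on to finish -- pointwise domination $C(G'-e)>C(G-e)$ over the natural bijection of candidate edges, followed by ``taking minima'' -- is false, and precisely at the edge where strictness is needed. Consider the root edge at $v_r$, the arm being lengthened. There the increment is
\[
\bigl[C(K^{a_1-1,a_2,\dots,a_{r-1},0})-C(K^{a_1,a_2,\dots,a_{r-1},0})\bigr]+\bigl[C(P_{a_r+1})-C(P_{a_r})\bigr],
\]
where the first bracket is \emph{negative} (you shortened the long arm of the residual graph), so your assertion that ``the change in $C(K^{\cdots})$ strictly dominates, yielding a positive increment'' cannot hold as stated; in fact the total is negative in general. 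Concretely, for $G=K^{3,1,1}$ and $G'=K^{2,2,1}$ one has $C(G'-v_3u_{3,1})=C(K^{2,1,0})+C(P_2)=9.375+1=10.375$, while $C(G-v_3u_{3,1})=C(K^{3,1,0})+C(P_1)=11.4375$, so the pointwise inequality fails by more than $1$; worse, $10.375$ is exactly $R^L(G')$ here, i.e.\ the minimum of $G'$ is attained at this very edge, so the failure cannot be waved away. (Some clique-edge comparisons also degenerate to equality, e.g.\ $K^{s,0,0}\to K^{s-1,1,0}$ at the edge $v_1v_r$, so even the ``$\ge$'' version would not give a strict minimum comparison.)

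What rescues the argument is exactly the step you tried to avoid: one must first determine \emph{which} edge attains the minimum. The paper shows $C(G-v_1v_{1,1})<C(G-v_1v_2)\le C(G-v_iv_j)$ and $C(G-v_1v_{1,1})\le C(G-v_iu_{i,1})$, so $R^L(G)=C(G-v_1u_{1,1})$, which yields the closed formula \eqref{MM}; the balancing move is then carried out on that single explicit function $f(a_1,\dots,a_r)$ (with the case split $p=1$ versus $p>1$), and monotonicity of $f$ under balancing immediately gives the lemma. Alternatively, you could repair your scheme by comparing every candidate edge of $G'$ not with its ``corresponding'' edge of $G$ but with the true minimizer $v_1u_{1,1}$ of $G$ -- but establishing that this is the minimizer is the main content of the paper's proof, so as written your proposal has a genuine gap rather than a different complete route.
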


\begin{proof}
Let $G=K^{a_1,\dots,a_r}\in \mathcal{G}(r,s)$. First, we show that
\begin{equation}\label{MM}
\begin{aligned}
R^L(G)&=2s-4+2^{2-a_1}+\sum_{i=2}^{r}2^{-a_i}\\
&\quad +\sum_{2\le i<j\le r}(2-2^{-a_i})(2-2^{-a_j}).
\end{aligned}
\end{equation}

For $i=1,\dots,r$, let $Q_i$ be the component of $G-E(K_r)$ containing $v_i$.
Then $Q_i$ is a path $P_{a_i+1}$ with one terminal vertex $v_i$, say  $Q_i=v_{i,0}v_{i,1}\dots v_{i,a_i}$ with $v_{i,0}=v_i$.
Let $V_i=V(Q_i)$ for $i=1,\dots, r$.

Let $u\in V_i$ and $v\in V_j$ with $1\le i<j\le r$.
Note that $d_{G-v_iv_j}(u,v)=d_G(u,v)+1$, and if $\{w,z\}\ne\{u,v\}$, then $d_{G-v_iv_j}(w,z)=d_G(w,z)$. So
\begin{align*}
&\quad C(G-v_iv_j)\\
&=C(G)-2\sum_{u\in V_i}\sum_{v\in V_j} \left(2^{-d_G(u,v)}-2^{-d_{G-v_iv_j}(u,v)}\right)\\
&=C(G)-\sum_{u\in V_i}\sum_{v\in V_j} 2^{-d_G(u,v)}\\
&=C(G)-\frac{1}{2}\sum_{u\in V_i}2^{-d_G(u,v_i)}\sum_{v\in V_j} 2^{-d_G(v_j,v)}\\
&=C(G)-\frac{1}{2}\sum_{s=0}^{a_i} 2^{-s} \sum_{s=0}^{a_j}2^{-s}\\
&=C(G)-\frac{1}{2}(2-2^{-a_i})(2-2^{-a_j}).
\end{align*}
As $a_1\ge\dots\ge a_r$,
we see that for $1\le i<j\le r$,
\begin{equation}\label{eq1}
\begin{aligned}
C(G-v_iv_j) & \ge C(G-v_1v_2)\\
&=C(G)-\frac{1}{2}(2-2^{-a_1})(2-2^{-a_2}).
\end{aligned}
\end{equation}
Note that
\begin{align*}
&\quad C(G-v_iv_{i,1})\\
& = C(G)-2\sum_{u\in V_i\setminus \{v_i\}}\sum_{v\in  V(G)\setminus V_i}2^{-d_G(u,v)}\\
&\quad -2\sum_{w\in V_i\setminus\{v_i\}} 2^{-d_G(v_i,w)}\\
& =C(G)-\sum_{u\in V_i\setminus \{v_i\}}2^{-d_G(u,v_i)}\sum_{\substack{1\le j\le r// j\ne i}}\sum_{v\in V_j}2^{-d_G(v_j,v)}\\
&\quad -\sum_{j=1}^{a_i}2^{-j+1}\\
&=C(G)-(1-2^{-a_i})\sum_{\substack{1\le j\le r// j\ne i}} (2-2^{-a_j})-\left(2-2^{-(a_i-1)}\right)\\
&=C(G)-(1-2^{-a_i})\left(\sum_{j=1}^r (2-2^{-a_j})+2^{-a_i}\right)
\end{align*}
and
$(1-2^{-a_i})\left(\sum_{j=1}^r (2-2^{-a_j})+2^{-a_i}\right)$
is maximized if $i=1$ because $a_1\ge \dots\ge a_r$.
Thus, by Lemma \ref{pendant}, for $i=1,\dots,r$ with $a_i>0$ and $j=1,\dots,a_i$, we get
\begin{equation}\label{eq2}
\begin{aligned}
 &\quad C(G-v_{i,j-1}v_{i,j})\\
 &\ge C(G-v_iv_{i,1}) \ge C(G-v_1v_{1,1})\\
&=C(G)-(1-2^{-a_1})\sum_{j=2}^r(2-2^{-a_j}) \\
&\quad -\left(2-2^{-(a_1-1)}\right).
\end{aligned}
\end{equation}
Now it follows from Eqs. \eqref{eq1} and \eqref{eq2} that
\begin{align*}
& \quad  C(G-v_1v_{1,1})-C(G-v_1v_2)\\
&=\frac{1}{2}(2-2^{-a_1})(2-2^{-a_2})-(1-2^{-a_1})\sum_{j=2}^r(2-2^{-a_j})\\
&\quad -\left(2-2^{-(a_1-1)}\right)\\
&<\frac{1}{2}(2-2^{-a_1})(2-2^{-a_2})-(1-2^{-a_1})(2-2^{-a_2}) \\
&\quad -\left(2-2^{-(a_1-1)}\right)\\
&=-\frac{1}{2}(2-2^{-a_1})(2-2^{-a_2})+2^{-(a_1-1)}-2^{-a_2}\\
&\le \begin{cases}
-\frac{1}{2}(2-2^{-a_1})(2-2^{-a_2}) & \mbox{if $a_1\ge a_2+1$}\\[1mm]
-2+3\cdot 2^{-a_1}-2^{-2a_1-1} & \mbox{if $a_1= a_2$}
\end{cases}\\
&< 0.
\end{align*}
Therefore $R^L(G)=C(G-v_1v_{1,1})$.
Note that $G[V_i\cup \{v_1\}]\cong P_{a_i+2}$ for $i=2,\dots, r$. By direct calculation, we find that
\begin{align*}
&\quad C(G-v_1v_{1,1})-C(P_{a_1})-\sum_{i=2 }^rC(P_{a_i+2})\\
&=\sum_{2\le i<j\le r}\sum_{u\in V_i}2^{-d_G(u,v_i)}\sum_{v\in V_j}2^{-d_G(v_j,v)}\\
&=\sum_{2\le i<j\le r}\sum_{s=0}^{a_i}2^{-s}\sum_{t=0}^{a_j}2^{-t}\\
&=\sum_{2\le i<j\le r}(2-2^{-a_i})(2-2^{-a_j}),
\end{align*}
from which  \eqref{MM} follows, as $C(P_{a_1})+\sum_{i=2 }^rC(P_{a_i+2})=2a_1-4+2^{2-a_1}+\sum_{i=2}^r \left(2(a_i+2)-4+2^{-a_i}\right)$.

Denote the expression for $R^L(G)$ in \eqref{MM} by $f(a_1,\dots, a_r)$.
Suppose that $a_1\ge a_r+2$.
Let $p$ be the largest integer  and $q$ be the smallest integer such that $a_p=a_1$ and $a_p\ge a_q+2$. Then $1\le p<q\le r$. For $i=1, \dots, r$, let $b_p=a_p-1$, $b_q=a_q+1$ and $b_i=a_i$ if $i\ne p,q$.
If  $p=1$, then
\begin{align*}
&\quad f(b_1,\dots, b_r)-f(a_1, \dots, a_r)\\
&=2^{2-b_1}+\sum_{i=2}^{r}2^{-b_i}+\sum_{2\le i<j\le r}(2-2^{-b_i})(2-2^{-b_j})\\
&\quad  -2^{2-a_1}-\sum_{i=2}^{r}2^{-a_i}-\sum_{2\le i<j\le r}(2-2^{-a_i})(2-2^{-a_j})\\
&=2^{2-a_1}-2^{-1-a_q}+2^{-1-a_q}\sum_{\substack{2\le i\le r// i\ne q}}(2-2^{-a_i})\\
&>0,
\end{align*}
and if  $p>1$, then
\begin{align*}
&\quad f(b_1,\dots, b_r)-f(a_1, \dots, a_r)\\
&=2^{2-b_1}+\sum_{i=2}^{r}2^{-b_i}+\sum_{2\le i<j\le r}(2-2^{-b_i})(2-2^{-b_j})\\
&\quad -2^{2-a_1}-\sum_{i=2}^{r}2^{-a_i} -\sum_{2\le i<j\le r}(2-2^{-a_i})(2-2^{-a_j})\\
&=(2^{-1-a_q}-2^{-a_p})\left(\sum_{\substack{1\le i\le r// i\neq p,q}}(2-2^{-a_i})+1\right)\\
&>0.
\end{align*}
So $f(a_1,\dots, a_r)$ is maximized if and only if
 $a_1-a_r=0,1$.
\end{proof}

Denote by $PK_{r+s, s}$ the graph $K^{a_1,\dots,a_r}$  with $a_1\ge \dots\ge a_r$, $\sum_{i=1}^ra_i=s$, and $a_1-a_r=0,1$.

A clique of a graph $G$ is an induced subgraph of $G$ that is complete. A block is trivial if it has at exactly two vertices, and it is nontrivial if it has at least three vertices.

For integers $b_1$, $b_2$ and $n$ with $b_2\ge b_1\ge 3$ and $2(b_1+b_2)-3=n$, let $H_n(b_1, b_2)$ be the $n$-vertex graph consisting of two cliques of size $b_1$ and $b_2$ with a common vertex  and  a pendant edge at each other vertex.

\begin{theorem}\label{cutvertex}
Let $G$ be an $n$-vertex connected graph with $k$ cut vertices, where $2\le k\le n-3$.
Let  $k=(n-k)q+r$ with $0\le r<k$. If
$(n,k)= (9,5)$, then $R^L(G)\le 16$ with equality if and if $G\cong H_9(3,3)$.
If $(n,k)\ne (9,5)$, then
\begin{align*}
R^L(G)&\le 2(n^2+k^2-2nk-3n+4k)\\
&\quad -2^{-q}(2n^2+2k^2-4nk-7n+7k-1)\\
&\quad +2^{-2q-1}(n^2+k^2-2nk-3n+3k+2)
\end{align*}
if $r=0$, and
\begin{align*}
R^L(G)&\le (n-k)(2n-2k-6)+2k\\
&\quad -2^{-q}\left((n-k)(2n-2k-r-6)+\frac{5r+1}{2} \right)\\
&\quad +2^{-2q-2}\left(2(n-k-2)(n-k-r)+\frac{r^2-3r+2}{2}  \right)
\end{align*}
otherwise,
and the bound for $R^L(G)$ is attained
if and only if $G\cong PK_{n,k}$ when $(n,k)\ne (11,6)$ and $G\cong PK_{11,6}, H_{11}(3,4)$ when $(n,k)= (11,6)$.
\end{theorem}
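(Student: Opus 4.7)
The plan is to classify all extremal graphs via their block structure and then compare the surviving candidate families.

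First, take $G$ to be an $n$-vertex connected graph with exactly $k$ cut vertices that maximizes $R^L$. By Lemma \ref{AddE}, adding any missing chord inside a single block preserves both the connectivity and the set of cut vertices while strictly increasing $R^L$, so every block of $G$ must be a complete graph. Thus $G$ is a block graph in which the block-cut tree consists of complete graphs glued at cut vertices.

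Second, I would use Lemmas \ref{new1} and \ref{new2} to prune this block tree. Lemma \ref{new1} lets me contract any internal path of length $\ge 1$ lying strictly between two parts of $G$, and Lemma \ref{new2} straightens any pendant subtree into a pendant path; careful bookkeeping of which vertices gain or lose cut-vertex status under each move shows that among graphs with exactly $k$ cut vertices the extremal $G$ must belong to one of the following two families: (I) a single non-trivial block $K_{n-k}$ with $n-k$ pendant paths (possibly of length $0$) attached one at each vertex, i.e.\ $G \cong K^{a_1,\dots,a_{n-k}}$ with $a_1 \ge \dots \ge a_{n-k} \ge 0$ and $\sum_i a_i = k$; or (II) exactly two non-trivial cliques sharing one cut vertex, with a pendant edge at every other vertex of both cliques, i.e.\ $G \cong H_n(b_1,b_2)$ with $3 \le b_1 \le b_2$ and $2(b_1+b_2)-3 = n$.

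Third, handle each family separately. For family (I), Lemma \ref{vertexc} (applied with $r=n-k\ge 3$ and $s=k\ge 2$) forces the pendant path lengths to be balanced, giving $G \cong PK_{n,k}$, and substituting the balanced sequence into the closed formula \eqref{MM} yields the two expressions in the theorem according as $r=0$ or $r>0$. For family (II), note that $n=2(b_1+b_2)-3$ is odd and $k=b_1+b_2-1=(n+1)/2$, so this family contributes only for one value of $k$ at each odd $n$; a direct computation analogous to the derivation of \eqref{MM} (enumerate the candidate removed edge: a pendant edge at a vertex in the larger clique minimizes $C(G-e)$) gives $R^L(H_n(b_1,b_2))$ in closed form and shows the maximum is attained at the most balanced pair.

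Finally, compare the two families in the regime where both are possible. A direct expansion of the closed forms shows $R^L(PK_{n,k}) \ge R^L(H_n(b_1,b_2))$ except for $(n,k)=(9,5)$, where $H_9(3,3)$ strictly wins, and $(n,k)=(11,6)$, where there is equality with $PK_{11,6}$. The main obstacle I expect is the second step: neither Lemma \ref{new1} nor Lemma \ref{new2} automatically preserves the number of cut vertices, so each transformation must be paired with a compensating local modification (attaching or detaching a suitable pendant edge, or moving which vertex plays the role of the central cut vertex) so that the resulting graph still has exactly $k$ cut vertices. Once the reduction to families (I) and (II) is secured, the remaining algebra — including the borderline cases $(9,5)$ and $(11,6)$ — is routine numerical verification from the closed forms.
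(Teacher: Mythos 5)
Your step 2 is where the proof actually lives, and as written it is a gap, not a proof. Lemmas \ref{new1} and \ref{new2} compare $R^L$ of two graphs that in general have \emph{different} numbers of cut vertices (e.g.\ applying Lemma \ref{new1} turns $u_t$ into a leaf and typically lowers the cut-vertex count), so they cannot by themselves prune the block tree inside the class of graphs with exactly $k$ cut vertices; you acknowledge this and appeal to unspecified ``compensating local modifications,'' but designing transformations that (a) preserve the cut-vertex count exactly, (b) strictly increase $R^L$, and (c) fail precisely at $H_9(3,3)$ and $H_{11}(3,4)$ is the entire difficulty. The paper does not use \ref{new1}/\ref{new2} for this at all: after observing (via Lemma \ref{AddE}) that all blocks are cliques and every cut vertex lies in exactly two blocks, it introduces a bespoke edge-relocation $G'=G-\{yu_0,yu_1:y\in N\}+\{yw:y\in N,\,w\in V(B_2)\}$ between two nontrivial blocks, pins down which edge attains the minimum in $R^L(G')$ (Claims 1--2), and then through Claims 3--5 and the inequalities \eqref{V1}, \eqref{V2}, \eqref{LY} forces $t=1$, $\ell=1$, all pendant paths at the two blocks to have length one, and $b_1=3$, $b_2\le 4$ --- i.e.\ it proves that any configuration with two or more nontrivial blocks other than $H_9(3,3)$, $H_{11}(3,4)$ can be strictly improved within the class. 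Your proposal contains no argument excluding, say, three nontrivial cliques in a row, or two cliques joined by a path, or two cliques carrying pendant paths of length $2$; asserting that the extremal graph lands in families (I) or (II) is assuming the hard part.

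The surrounding skeleton is consistent with the paper: step 1 (all blocks complete, hence with the ``two blocks per cut vertex'' observation the one-nontrivial-block case is exactly $K^{a_1,\dots,a_{n-k}}$), the use of Lemma \ref{vertexc} plus formula \eqref{MM} to get $PK_{n,k}$ and the stated closed forms, and the final numerical comparison isolating $(9,5)$ and $(11,6)$ all match the paper. But until the reduction in step 2 is carried out with cut-vertex-preserving transformations and explicit control of the minimizing edge, the proof is incomplete at its central point.
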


\begin{proof}
Suppose that $G$ is an $n$-vertex connected graph with $k$ cut vertices that maximizes the link residual closeness.

By Lemma \ref{AddE}, all blocks of $G$ are cliques with at least two vertices and each cut
vertex of $G$ is contained in exactly two blocks. So there are
exactly $k+1$ blocks in $G$.
As $k\le n-3$, there is at least one nontrivial block of $G$.
We call a nontrivial block $B$  of $G$ a pendant block if for some vertex $x$ in this block, the component of $G-E(B)$ containing $x$ is a path (that may be trivial) with one end vertex being $x$. In this case, we call this path
the path at $x$.
We choose  a pendant block $B_1$ so that for one of its vertex, say $u_0$,   the length $\ell$ of the  path at $u_0$ is minimum among all paths at vertices of all pendant blocks of $G$.

We will prove that either there is exactly one nontrivial block of $G$ or  $G\cong H_{9}(3,3), H_{11}(3,4)$. To this end,  we suppose that
there are at least two nontrivial blocks of $G$. Then it need only to show that
 $G\cong H_{9}(3,3), H_{11}(3,4)$.

Let $W$ be the set of vertices in all nontrivial blocks different from $B_1$. Let $t-1=\min\{d_G(u,u'): u\in V(B_1), u'\in W\}$. Assume that
$t-1=d_G(u_1,u_t)$ with $u_t\in V(B_2)$ for some nontrivial block $B_2$ different from $B_1$. Then there is a unique path, say  $u_1\dots u_t$.
For any $w\in V(B_1)$ ($w\in V(B_2)$, respectively), let $T_w$ be the   the component of $G-E(B_1)$ ($G-E(B_2)$, respectively) containing $w$.
Let $N=V(B_1)\setminus \{u_0,u_1\}$.
Let $V_0=V(T_{u_0})$,
$V_1=\cup_{w\in N}V(T_w)$
and $V_2=\cup_{w\in V(B_2)\setminus \{u_t\}}V(T_w)$.
Let \[
G'=G-\{yu_0,yu_1:y\in N\}+\{yw:y\in N, w\in V(B_2)\}.
\]
It is obvious that $G'$ is an $n$-vertex connected graph with $k$ cut vertices.

Assume that $R^L(G')=C(G'-e)$ with $e\in E(G')$.
By Lemma \ref{pendant}, $e\not\in E(G[V_0])\cup\{u_{i-1}u_i:1\le i\le t-1\}$. Let $b_i=|V(B_i)|$ for $i=1,2$.

\noindent
{\bf Claim 1.}  $e\notin E(G[V_1])\cup E(G[V_2])$  and $e\ne u_tx$ for any $x\in V(B_2)\setminus\{u_t\}$.

Suppose that this is not true. That is, $e\in E(G[V_1])\cup E(G[V_2])$  or $e= u_tx$ for some $x\in V(B_2)\setminus\{u_t\}$.
As we pass from $G-e$ to $G'-e$, the distance between any pair of vertices in $V_0\cup V_2\cup \{u_1,\dots,u_t\}$ and in $V_1$ remains unchanged.
Moreover, for any $z\in V_1$,
\[
\sum_{i=1}^{t}\left(2^{-d_{G'-e}(z,u_i)}-2^{-d_{G-e}(z,u_i)}\right)=0.
\]
So
\begin{equation}\label{U1}
R^L(G')-R^L(G)\ge C(G'-e)-C(G-e)=2\sum_{z\in V_1}F(z)
\end{equation}
with
\[
F(z)=\sum_{w\in V_0\cup V_2}\left(2^{-d_{G'-e}(z,w)}-2^{-d_{G-e}(z,w)}\right).
\]
We will show $F(z)\ge 0$ for all $z\in V_1$, and $F(z)>0$ for some $z\in N\subseteq V_1$, so from \eqref{U1}, we have
$R^L(G')>R^L(G)$, a contradiction.

Suppose first that $e\in E(G[V_1])\cup E(G[V_2])$.
For any $z\in V_1$, we have
\[
d_{G'-e}(z,w)=d_{G-e}(z,w)
\begin{cases}
+t & \mbox{if $w\in V_0$},\\
-t & \mbox{if $w\in V_2$}.
\end{cases}
\]
Note also that
\begin{align*}
&\quad \sum_{w\in V_2}2^{-d_{G-e}(u_t,w)}-\sum_{w\in V_0}2^{-d_{G-e}(u_0,w)-1}\\
&=\sum_{w\in V_2}2^{-d_{G-e}(u_t,w)}-\sum_{i=1}^{\ell+1}2^{-i}\\
&\ge (b_2-1)-1+2^{-\ell-1}\\
&>0.
\end{align*}
So for $z\in V_1$,
\begin{align*}
 F(z)
&=(2^{-t}-1)\sum_{w\in V_0}2^{-d_{G-e}(z,w)}\\
&\quad +(2^{t}-1)\sum_{w\in V_2}2^{-d_{G-e}(z,w)}\\
&=(2-2^{-t+1})2^{-d_{G-e}(z,u_0)}\\
&\quad \cdot \left(\sum_{w\in V_2}2^{-d_{G-e}(u_t,w)}-\sum_{w\in V_0}2^{-d_{G-e}(u_0,w)-1}\right)\\
&\ge 0
\end{align*}
and the inequality is strict if $d_{G-e}(z,u_0)<\infty$, for example, for $z\in N$.

Suppose next that $e=u_tx$ for some $x\in V(B_2)\setminus \{u_t\}$.
For any $z\in V_1$,
\[
d_{G'-e}(z,w)=d_{G-e}(z,w)\begin{cases}
+t & \mbox{if $w\in V_0$},\\
-t & \mbox{if $w\in V_2\setminus V(T_x)$},\\
-t-1 & \mbox{if $w\in V(T_x)$},
\end{cases}
\]
so
\begin{align*}
 F(z)
&=(2^{-t}-1)\sum_{w\in V_0}2^{-d_{G-e}(z,w)}\\
&\quad +(2^{t}-1)\sum_{w\in V_2\setminus V(T_x)}2^{-d_{G-e}(z,w)}\\
&\quad +(2^{t+1}-1)\sum_{w\in V(T_x)}2^{-d_{G-e}(z,w)}\\
&=(2-2^{-t+1})2^{-d_{G}(z,u_0)}\\
&\quad \cdot \left(\sum_{w\in V_2\setminus V(T_x)}2^{-d_{G}(u_t,w)}-\sum_{w\in V_0}2^{-d_{G}(u_0,w)-1}\right)\\
&\quad +(2-2^{-t})2^{-d_{G}(z,u_0)}\sum_{w\in V(T_x)}2^{-d_{G}(u_t,w)}\\
&\ge (2-2^{-t+1})2^{-d_{G}(z,u_0)}\\
&\quad \cdot\left(\sum_{w\in V_2}2^{-d_{G}(u_t,w)}-\sum_{i=1}^{\ell+1}2^{-i} \right)\\
&>0,
\end{align*}
as desired.
This proves Claim 1.

\noindent
{\bf Claim 2.}  $e\ne uv$ for any $u\in N$ and any $v\in V(B_2)$.

Suppose that this is not true. That is, $e= uv$ for some $u\in N$ and some $v\in V(B_2)$.
Let $f=uu_1$.
As we pass from $G-f$ to $G'-e$, the distance between any pair of vertices in $V_0\cup V_2\cup \{u_1,\dots,u_t\}$ and in $V_1$ remains unchanged.
So
\[
R^L(G')-R^L(G)\ge C(G'-e)-C(G-f)=2\sum_{z\in V_1}F(z)
\]
with
\[
F(z)=\sum_{w\in V_0\cup V_2\cup \{u_1,\dots,u_t\}}\left(2^{-d_{G'-e}(z,w)}-2^{-d_{G-f}(z,w)}\right).
\]
We will show that $F(z)>0$ for $z\in V_1$, so $R^L(G')>R^L(G)$, a contradiction.

Suppose first that $v=u_t$, that is, $e=uu_t$.
For any $z\in V_1$,
\[
\sum_{i=1}^t\left(2^{-d_{G'-e}(z,u_i)}-2^{-d_{G-f}(z,u_i)}\right)=0,
\]
so
\[
F(z)=\sum_{w\in V_0\cup V_2}\left(2^{-d_{G'-e}(z,w)}-2^{-d_{G-f}(z,w)}\right).
\]
If $z\in V(T_u)$, then
\[
d_{G'-e}(z,w)=d_{G-f}(z,w)\begin{cases}
+t+1&\mbox{if $w\in V_0$},\\
-t-1&\mbox{if $w\in V_2$},
\end{cases}
\]
so
\begin{align*}
F(z)
&=(2-2^{-t})2^{-d_{G}(z,u_0)}\\
&\quad \cdot \left(\sum_{w\in V_2}2^{-d_G(u_t,w)}-\sum_{w\in V_0}2^{-d_G(u_0,w)-1} \right)\\
&>0.
\end{align*}
If $z\in V_1\setminus V(T_u)$, then
\[
d_{G'-e}(z,w)=d_{G-f}(z,w)\begin{cases}
+t&\mbox{if $w\in V_0$},\\
-t&\mbox{if $w\in V_2$},
\end{cases}
\]
so
\begin{align*}
F(z)
&=(2-2^{-t+1})2^{-d_{G}(z,u_0)}\\
&\quad \cdot \left(\sum_{w\in V_2}2^{-d_G(u_t,w)}-\sum_{w\in V_0}2^{-d_G(u_0,w)-1} \right)\\
&>0.
\end{align*}

Suppose next that $v\in V(B_2)\setminus \{u_t\}$.
If $z\in V(T_u)$, then
\[
\sum_{i=0}^t(2^{-d_{G'-e}(z,u_i)}-2^{-d_{G-f}(z,u_i)})=0
\]
and
\[
d_{G'-e}(z,w)=d_{G-f}(z,w)\begin{cases}
+t&\mbox{if $w\in V_0\setminus\{u_0\}$},\\
-t&\mbox{if $w\in V(T_v)$},\\
-t-1&\mbox{if $w\in V_2\setminus V(T_v)$},
\end{cases}
\]
so
\begin{align*}
F(z)&=\sum_{w\in V_0\cup V_2\setminus\{u_0\}}\left(2^{-d_{G'-e}(z,w)}-2^{-d_{G-f}(z,w)} \right)\\
&=(1-2^{-t})2^{-d_G(z,u_0)}\\
&\quad \cdot \left(\sum_{w\in V(T_v)}2^{-d_G(u_t,w)}-\sum_{w\in V_0\setminus\{u_0\}}2^{-d_G(u_0,w)} \right)\\
&\quad +(2-2^{-t})2^{-d_G(z,u_0)}\sum_{w\in V_2\setminus V(T_v)}2^{-d_G(u_t,w)}\\
&\ge (1-2^{-t})2^{-d_G(z,u_0)}\\
&\quad \cdot \left(\sum_{w\in V_2}2^{-d_G(u_t,w)}-\sum_{w\in V_0\setminus\{u_0\}}2^{-d_G(u_0,w)} \right)\\
&>0.
\end{align*}
If $z\in V_1\setminus V(T_u)$, then
\[
\sum_{i=1}^t(2^{-d_{G'-e}(z,u_i)}-2^{-d_{G-f}(z,u_i)})=0
\]
and
\[
d_{G'-e}(z,w)=d_{G-f}(z,w)\begin{cases}
+t&\mbox{if $w\in V_0\setminus\{u_0\}$},\\
-t&\mbox{if $w\in V_2$},
\end{cases}
\]
so
\begin{align*}
F(z)&=\sum_{w\in V_0\cup V_2}\left(2^{-d_{G'-e}(z,w)}-2^{-d_{G-f}(z,w)} \right)\\
&=(2-2^{-t+1})2^{-d_G(z,u_0)}\\
&\quad \cdot \left(\sum_{w\in V_2}2^{-d_G(u_t,w)}-\sum_{w\in V_0\setminus \{u_0\}}2^{-d_G(u_0,w)-1} \right)\\
&>0.
\end{align*}

In any case, we have $F(z)>0$ for $z\in V_1$. This proves Claim 2.

By Claims 1 and 2, we have   $e=u_{t-1}u_t$.

From the definition of $\ell$, one has $\sum_{w\in V_0}2^{-d_G(u_1,w)}=\sum_{i=1}^{\ell+1}2^{-i}=1-2^{-\ell-1}$.
Then
\begin{equation*}
\sum_{w\in V(T_z)}2^{-d_G(u_1,w)}\ge \sum_{w\in V_0}2^{-d_G(u_1,w)}=1-2^{-\ell-1}
\end{equation*}
for any $z\in N$ whether $T_z$ is a pendant path at $z$ or not,
so
\begin{equation}\label{V1}
\begin{aligned}
\sum_{w\in V_1}2^{-d_G(w,u_0)}&=\sum_{w\in V_1}2^{-d_G(w,u_1)}\\
&=\sum_{z\in N}\sum_{w\in V(T_z)}2^{-d_G(u_1,w)}\\
&\ge (b_1-2)(1-2^{-\ell-1}).
\end{aligned}
\end{equation}
Similarly,
\begin{equation*}
\sum_{w\in V(T_z)}2^{-d_G(u_t,w)}\ge \sum_{w\in V_0}2^{-d_G(u_1,w)}=1-2^{-\ell-1}
\end{equation*}
for any $z\in V(B_2)\setminus\{u_t\}$, so
\begin{equation}\label{V2}
\begin{aligned}
\sum_{w\in V_2}2^{-d_G(u_t,w)}&=\sum_{z\in V(B_2)\setminus\{u_t\}}\sum_{w\in V(T_z)}2^{-d_G(u_1,w)}\\
&\ge (b_2-1)(1-2^{-\ell-1}).
\end{aligned}
\end{equation}
\noindent
{\bf Claim 3.} $t=1$.

Suppose that $t\ge 2$.
As we pass from $G-e$ to $G'-e$, the distance between any pair of vertices in $V_0\cup \{u_1,\dots,u_{t-1}\}$, in $V_1$ and in $V_2\cup \{u_t\}$ remains unchanged.
Then
\[
C(G'-e)-C(G-e)=2\sum_{z\in V_1}F(z)
\]
with
\begin{align*}
F(z)&=\sum_{w\in V_2\cup \{u_t\}}2^{-d_{G'}(z,w)}\\
&\quad -\sum_{w\in V_0\cup \{u_1,\dots,u_{t-1}\}}2^{-d_G(z,w)}.
\end{align*}
For any $z\in V_1$, there is some vertex $x\in N$ such that $z\in V(T_x)$.
Then
\[
d_{G'}(z,w)=d_{G}(z,x)+d_{G'}(x,w)
\]
for any $w\in V_2$
and
\[
d_{G}(z,w)=d_G(z,x)+d_G(x,w)
\]
for any $w\in V_0\cup \{u_1,\dots,u_{t-1}\}$.
So
\begin{align*}
&\quad  F(z)\cdot 2^{d_{G}(z,x)}\\
&=\sum_{w\in V_2\cup \{u_t\}}2^{-d_{G'}(x,w)}-\sum_{w\in V_0\cup \{u_1,\dots,u_{t-1}\}}2^{-d_G(x,w)}\\
&=\sum_{w\in V_2\cup \{u_t\}}2^{-d_{G'}(x,w)}-\sum_{i=1}^{\ell+1}2^{-i}-\sum_{i=1}^{t}2^{-i}\\
&\ge 1+(b_2-1)(1-2^{-\ell-1})-(1-2^{-\ell-1})-(1-2^{-t})\\
&\ge 1+1-2^{-\ell-1}-1+2{-t}\\
&>0,
\end{align*}
where the first inequality follows as $\sum_{w\in V_2}2^{-d_{G'}(x,w)}=\sum_{w\in V_2}2^{-d_{G'}(u_t,w)}\ge (b_2-1)(1-2^{-\ell-1})$ by Eq. \eqref{V2}.
It thus follows that $R^L(G')=C(G'-e)>C(G-e)\ge R^L(G)$,
a contradiction. So Claim 3 follows.

By Claim 3, $t=1$. That is,  $V(B_1)\cap V(B_2)=\{u_1\}$ and  $e=u_0u_1$.

\noindent
{\bf Claim 4.}  For any $y\in N\cup V(B_2)\setminus\{u_1\}$,
$y$ lies in some nontrivial block different from $B_1$ and $B_2$ or $T_y$ is a pendant path of length $\ell$ at $y$.

Suppose to the contrary that $y$ lies outside any nontrivial block different from $B_1$ and $B_2$ and
$T_y$ is not a pendant path of length of length $\ell$.
Then we have two cases: (a) $T_y$ is a pendant path of length at least $\ell+1$, or (b) $T_y$ is not a path, which implies that there is a nontrivial block not containing $y$ in $T_y$.
In either case, $y$ has a unique neighbor $y'$ in $T_y$,
$\sum_{w\in V(T_y) \setminus \{y\}}2^{-d_{G}(y,w)}-\sum_{w\in V_0}2^{-d_{G}(u_1,w)}\ge 0$, and so
\begin{align*}
&\quad C(G'-e)- C(G'-yy')\\
&\ge 2\sum_{z\in V(G)\setminus (V_0\cup V(T_y) \setminus \{x\})}\left(\sum_{w\in V(T_y) \setminus \{y\}}2^{-d_{G'}(z,w)}-\sum_{w\in V_0}2^{-d_{G'}(z,w)}\right)\\
&=2\sum_{z\in V(G)\setminus (V_0\cup V(T_y) \setminus \{y\})}2^{-d_{G}(z,u_1)}\left(\sum_{w\in V(T_y) \setminus \{y\}}2^{-d_{G}(y,w)}-\sum_{w\in V_0}2^{-d_{G}(u_1,w)}\right)\\
&\ge 0.
\end{align*}
On the other hand,
$C(G'-yy')-C(G-yy')>0$ as in the argument in Claim 1.
So
$R^L(G) \le C(G-yy')<C(G'-yy') \le C(G'-e)=R^L(G')$,
a  contradiction.
This proves Claim 4.

\noindent
{\bf Claim 5.} $\ell\ge 1$.

Suppose that $\ell=0$.
Since $G$ possesses  $k\ge 2$ cut vertices, there is some cut vertex $x\in (V(B_1)\cup V(B_2))\setminus \{u_0, u_1\}$. By Claim 4, $x$ lies in some nontrivial block $B$ of $G$ different from $B_1$ and $B_2$.

Assume that $x\in V(B_1)$ as the case when $x\in V(B_2)$ may be proved similarly. Assume further that $B$ is such a nontrivial block with the largest size $b$.
Note that
\begin{equation}\label{LY}
\sum_{w\in V(T_x)}2^{-d_{G}(u_0,w)} \ge 2^{-1}+(b-1)2^{-2}.
\end{equation}
Let  $f=u_1x$.
As we pass from $G-f$ to $G'-e$, the distance between any pair of vertices in $V_1$, in $V_1\cup \{u_1\}\setminus V(T_x)$ and in $V_2\cup\{u_1\}$ remains unchanged.
So
\begin{align*}
&\quad C(G'-e)-C(G-f)\\
&=2\sum_{z\in V_2}\sum_{w\in V_1}\left(2^{-d_{G'-e}(z,w)}-2^{-d_{G-f}(z,w)} \right) \\
&\quad +2\sum_{w\in V(T_x)}\left(2^{-d_{G'-e}(u_1,w)}-2^{-d_{G-f}(u_1,w)} \right)\\
&\quad -2\sum_{w\in V_1\cup V_2\cup \{u_1\}}2^{-d_{G-f}(u_0,w)}.
\end{align*}
For $z\in V_2$,
\[
d_{G'-e}(z,w)=d_{G-f}(z,w)\begin{cases}
-1&\mbox{if $w\in V_1\setminus V(T_x)$},\\
-2&\mbox{if $w\in V(T_x)$},
\end{cases}
\]
so
\begin{align*}
&\quad \sum_{w\in V_1}\left( 2^{-d_{G'-e}(z,w)}-2^{-d_{G-f}(z,w)} \right)\\
&=\sum_{w\in V_1\setminus V(T_x)}2^{-d_{G-f}(z,w)} +3\sum_{w\in V(T_x)}2^{-d_{G-f}(z,w)}\\
&=2^{-d_{G}(z,u_1)}\sum_{w\in V_1\setminus V(T_x)} 2^{-d_{G}(u_0,w)} \\
&\quad +2^{-d_{G}(z,u_1)-1}\cdot3\sum_{w\in V(T_x)}2^{-d_{G}(u_0,w)}.
\end{align*}
For any $w\in V(T_x)$,
$d_{G'-e}(u_1,w)=d_{G-f}(u_1,w)-1$,
so
\[
2^{-d_{G'-e}(u_1,w)}-2^{-d_{G-f}(u_1,w)}=2^{-d_G(u_0,w)-1}.
\]
Note that
\begin{align*}
&\quad \sum_{w\in V_1\cup  V_2\cup \{u_1\}}2^{-d_{G-f}(u_0,w)}\\
&=\sum_{w\in V_1\setminus V(T_x)}2^{-d_G(u_0,w)} +\sum_{w\in V(T_x)}2^{-d_G(u_0,w)}\\
 &\quad +\sum_{z\in V_2}+2^{-d_G(z,u_1)-1}+2^{-1}.
\end{align*}
Thus
\begin{align*}
&\quad  C(G'-e)-C(G-f)\\
&=2\left(\sum_{z\in V_2}2^{-d_{G}(z,u_1)} -1\right)\sum_{w\in V_1\setminus V(T_x)} 2^{-d_{G}(u_0,w)}\\
&\quad +2\left(3\sum_{z\in V_2} 2^{-d_{G}(z,u_1)-1}-2^{-1} \right)\\
&\quad \cdot\left(\sum_{w\in V(T_x)}2^{-d_{G}(u_0,w)}-3^{-1}\right)-\frac{4}{3}.
\end{align*}
If $b_2\ge 4$ or $b\ge 4$, then by Eqs.~\eqref{V1},~\eqref{V2} and~\eqref{LY},
\begin{align*}
&\quad  C(G'-e)-C(G-f)\\
&\ge 2\left( (b_2-1)2^{-1}-1\right)(b_1-3)2^{-1}\\
&\quad + 2\left(3(b_2-1)2^{-2} -2^{-1} \right)\left(2^{-1}+(b-1)2^{-2}-3^{-1}\right)-\frac{4}{3}\\
&>0,
\end{align*}
and hence $R^L(G')=C(G'-e)>C(G-f)\ge R^L(G)$,
a contradiction.
So $b_2=b=3$.
If there is a nontrivial block $B'$ different  from $B_1, B_2, B$ such that $B'$ and  $B_2$ share a common vertex,
then
\begin{align*}
&\quad  C(G'-e)-C(G-f)\\
&> 2\left( (b_2-1)2^{-1}-1\right)(b_1-3)2^{-1}\\
&\quad + 2\left(3(b_2-1)2^{-2} -2^{-1} \right)\left(2^{-1}+(b-1)2^{-2}-3^{-1}\right)-\frac{4}{3}\\
&=0,
\end{align*}
so $R^L(G')>R^L(G)$, also a contradiction. So
 there is no nontrivial block $B'$ different  from $B_1$ such that $B'$ and  $B_2$ share a common vertex. From this fact, Claim 4 and the assumption that $\ell=0$, we see that
each vertex in $V(B_2)\setminus\{u_1\}$ is not a cut vertex.

By similar argument,  each vertex in $V(B)\setminus\{x\}$ is not a cut vertex.

Let $V(B_2)=\{u_1,u_2,u_3\}$,
$V(B)=\{x,x_1,x_2\}$ and
\[
G^*=G-u_2u_3-x_1x_2+x_2u_2+\{x_2w,u_2w:w\in V(B_1)\}.
\]
By similar argument as in Claims 1 and 2, $R(G^*)=C(G^*-u_1u_3)=C(G^*-xx_1)$.
Let $f_1=u_1u_3$, $f_2=u_1x$ and $W_1=V(G)\setminus(V(B_2)\cup V(B))$. Then
\begin{align*}
&\quad C(G^*-f_1)-C(G-f_2)\\
&=2\sum_{w\in W_1\cup\{x,x_1\}}\left(2^{-d_{G^*-f_1}(u_2,w)}-2^{-d_{G-f_2}(u_2,w)} \right)\\
&\quad +2\sum_{w\in W_1\cup\{u_1,u_2,x_1\}}\left(2^{-d_{G^*-f_1}(x_2,w)}-2^{-d_{G-f_2}(x_2,w)} \right)\\
&\quad +2\sum_{w\in\{x,x_1\}}(2^{-d_{G^*-f_1}(u_1,w)}-2^{-d_{G-f_2}(u_1,w)})-2\sum_{w\in V(G)\setminus\{u_3\}}2^{-d_{G-f_2}(u_3,w)}\\
&=2\left(\sum_{w\in W_1}2^{-d_G(u_2,w)}+3\cdot 2^{-d_G(u_2,x)-1}+3\cdot 2^{-d_{G}(u_2,x_1)-1}\right)\\
&\quad +2\left(\sum_{w\in W_1}2^{-d_G(x_2,w)}+3\cdot 2^{-d_{G}(x_2,u_1)-1}+7\cdot 2^{-d_G(u_2,x_2)-1}\right)+2^{-d_G(u_1,x_1)}\\
&\quad -2\left(\sum_{w\in W_1}2^{-d_G(u_3,w)}+2\cdot2^{-d_G(u_3,u_2)}+2^{-d_G(u_3,x)-1}+2\cdot2^{-d_G(u_3,x_1)-1}\right)\\
&=2\sum_{w\in W_1}2^{-d_G(u_2,w)}+3\cdot 2^{-2}+3\cdot 2^{-3}+3\cdot 2^{-2}+7\cdot 2^{-3}+2^{-2}-2-2^{-2}-2^{-2}\\
&=2\sum_{w\in W_1}2^{-d_G(u_2,w)}+\frac{1}{2}\\
&>0,
\end{align*}
and hence
$R^L(G^*)=C(G^*-f_1)>C(G-f_2)\ge R^L(G)$, a contradiction.
 This proves Claim 5.

By Claim 5, $\ell\ge 1$.
Denote by $u_0'$ the neighbor of $u_0$ in $V_0$.
Let $e'=u_0u_0'$.
As we pass from $G-e'$ to $G'-e$, the distance between any pair of vertices in $V_0$, in $V_1$ and in $V_2$ remains unchanged. Note that
\[
d_{G'-e}(z,w)=d_{G-e'}(z,w)-1
\]
for any $z\in V_1$, $w\in V_2$.
Thus
\begin{align*}
&\quad C(G'-e)-C(G-e') \\
&=2\sum_{z\in V_1}\sum_{w\in V_2}\left(2^{-d_{G'-e}(z,w)}-2^{-d_{G-e'}(z,w)} \right)\\
&\quad +2\sum_{z\in  V_0\setminus\{u_0\}}2^{-d_{G'-e}(z,u_0)} -2\sum_{z\in V_1\cup V_2\cup \{u_1\}}2^{-d_{G-e'}(z,u_0)}\\
&=2\sum_{z\in V_1}2^{-d_{G}(z,u_1)}\sum_{w\in V_2}2^{-d_{G}(u_1,w)}+\sum_{i=0}^{\ell-1}2^{-i}\\
&\quad  -2\sum_{z\in V_1\cup V_2\cup \{u_1\}}2^{-d_{G}(z,u_0)}\\
&=\left(2\sum_{z\in V_1}2^{-d_{G}(z,u_1)}-1\right)\left(\sum_{w\in V_2}2^{-d_{G}(u_1,w)} -1\right)-2^{-\ell+1}.
\end{align*}

If $\ell\ge 2$, then by Eqs.~\eqref{V1} and ~\eqref{V2},
\begin{align*}
&\quad C(G'-e)-C(G-e') \\
&\ge \left(2(b_1-2)(1-2^{-\ell-1})-1 \right)\left((b_2-1)(1-2^{-\ell-1}) -1\right) -2^{-\ell+1}\\
&\ge \left(1-2^{-2} \right)\left(1-2^{-2} \right)-2^{-1}\\
&> 0,
\end{align*}
a contradiction. It thus follows that $\ell=1$.

If there is a nontrivial block $B$ with size $b$ with  $V(B_1)\cap V(B)=\{x\}$ or $V(B_2)\cap V(B)=\{x\}$.
Note that
\[
\sum_{z\in V(T_x)}2^{-d_G(z,u_1)}\ge 2^{-1}+(b-1)2^{-2}(2-2^{-1}).
\]
Suppose first that $x\in N$. If $b_1\ge 4$, $b_2\ge 4$ or $b\ge 4$, then by Eqs.~\eqref{V1} and~\eqref{V2},
\begin{align*}
&\quad C(G'-e)-C(G-e') \\
&\ge \left(2\sum_{z\in V_1}2^{-d_{G}(z,u_1)}-1\right)\left(\sum_{w\in V_2}2^{-d_{G}(u_1,w)} -1\right)-2^{-\ell+1}\\
&= \left(2\sum_{z\in V(T_x)}2^{-d_{G}(z,u_1)}-1\right)\left(\sum_{w\in V_2}2^{-d_{G}(u_1,w)} -1\right)\\
&\quad +2\sum_{x\in V_1\setminus V(T_x)}2^{-d_{G}(z,u_1)}\left(\sum_{w\in V_2}2^{-d_{G}(u_1,w)} -1\right)-1\\
&\ge \left(1+(b-1)2^{-1}(2-2^{-1})-1\right)\left((b_2-1)(1-2^{-2}) -1\right)\\
&\quad +(b_1-3)(1-2^{-2})\left((b_2-1)(1-2^{-2}) -1\right) -1\\
&>0,
\end{align*}
a contradiction.
So $b=b_1=b_2=3$.

Note that there is no nontrivial block different from $B_1$ that has a common vertex with $B_2$, otherwise, we have
$\sum_{w\in V_2}2^{-d_G(u_1,w)}\ge 1-2^{-2}+2^{-1}+2\cdot2^{-1}(1-2^{-2})=2$ and hence \[
C(G'-e)-C(G-e')\ge (1+2-2^{-1}-1)(2-1)-1>0,
\]
a contradiction.
By Claim 4, the length of pendant path at either vertex in $V(B_2)\setminus\{u_1\}$ is one.

Let $G''=u_0u_1+\{u_1w:w\in V(B)\setminus\{x \} \}$.
Assume that $R^L(G'')=C(G''-f)$.
By similar argument as in Claims 1 and 2, $f=u_0x$.
Suppose that there is a nontrivial block different from $B$ at some vertex in $V(B)\setminus\{x\}$.
Let $T'_w$ be the component of $G-w$ containing $w$ for $w\in \{u_1\}\cup V(B)\setminus \{x\}$ and $V_B=\cup_{w\in V(B)\setminus\{x\}}V(T'_w)$.
Then
\begin{align*}
&\quad C(G''-f)-C(G''-u_0u_0')\\
&=\left(2\sum_{z\in V(T'_{u_1})}2^{-d_G(z,x)}-1 \right)\left(\sum_{w\in V_B}2^{-d_G(x,w)}-1 \right)-1\\
&\ge (1+2-2^{-1}-1)(2-1)-1\\
&>0,
\end{align*}
a contradiction.
By similar argument as in Claim 4, the length of pendant path at each vertex in $V(B)\setminus\{x \}$ is one.
Thus $G$ is a graph on $12$ vertices with $7$ cut vertices and exactly $3$ nontrivial blocks.
By direct calculation, $R^L(G)=25.625<27.375=R^L(PK_{12,7})$, a contradiction.

It follows that $x\notin N$, so $x\in V(B_2)$, which  is still impossible by similar arguments as above.
Thus,  there is no nontrivial block different from $B_1$ and $B_2$ that has a common vertex with $B_1$ or $B_2$.
By Claim 4, each pendant path at any vertex of $V(B_1)\cup V(B_2)\setminus\{u_1\}$ is of length one.  That is, $G\cong H_n(b_1, b_2)$.

Assume that $b_1\le b_2$.
If $b_1\ge 4$, then
\[
C(G'-e)-C(G-e')\ge \left(4(1-2^{-2})-1 \right)\left(3(1-2^{-2}) -1\right)-1>0,
\]
a contradiction.
So, $b_1=3$.
If $b_2\ge 6$, then
\[
C(G'-e)-C(G-e')\ge \left(2(1-2^{-2})-1 \right)\left(5(1-2^{-2}) -1\right)-1>0,
\]
a contradiction.
So $b_2\le 5$.
If $b_2=5$, then by direct calculation, we have $R^L(G')=36>35.25=R^L(G)$, a contradiction.
So we are left with two possibilities: $G\cong H_9(3,3)$ or $G\cong H_{11}(3,4)$.

Therefore, we have proved that $G\cong H_9(3,3), H_{11}(3,4)$, or there is precisely one nontrivial block of $G$.
By Lemma \ref{vertexc},
we have  $G\cong H_9(3,3)$ with $(n,k)= (9,5)$, $G\cong H_{11}(3,4)$ with $(n,k)=(11,6)$, or $G\cong PK_{n,k}$.

By direct calculation,  $R^L(H_9(3,3))=16$ and $R^L(H_{11}(3,4))=24.5$.

Let $q$ and $r$ be integers with $k=(n-k)q+r$ and $0\le r<k$. Assume that
$PK_{n,k}=K^{a_1,\dots,a_{n-k}}$ with $a_1\ge \dots\ge a_{n-k}$.
If $r=0$, then $a_1=\dots=a_{n-k}=q$, so we have
by Eq.~\eqref{MM} that
\begin{align*}
R^L(PK_{n,k})
&=2k-4+2^{2-q}+(n-k-1)2^{-q} +{{n-k-1}\choose 2}(2-2^{-q})^2.
\end{align*}
If $r\ge 1$, then $a_1=\dots=a_r=q+1$, $a_{r+1}=\dots =a_{n-k}=q$, so we have
by Eq.~\eqref{MM} that
\begin{align*}
R^L(PK_{n,k})
&=2k-4+2^{1-q}+(r-1)2^{-q-1}+(n-k-r)2^{-q}\\
&\quad +{{r-1}\choose 2}(2-2^{-q-1})^2+{{n-k-r}\choose 2}(2-2^{-q})^2\\
&\quad +(r-1)(n-k-r)(2-2^{-q-1})(2-2^{-q}).
\end{align*}
Now the proof is completed by noting that $R^L(PK_{9,5})=15.25<R^L(H_9(3,3))=16$ and $R^L(PK_{11,6})=24.5=R^L(H_{11}(3,4))$.
\end{proof}

%

\vspace{3mm}

\noindent
{\bf Data Availability} \\
All data are incorporated into the article and its online supplementary material.

\vspace{3mm}

\noindent
{\bf Acknowledgement}\\
This work was supported by  National Natural Science Foundation of China (No.~12071158).


\end{document}